\newtheorem{algorithm}{Algorithm}
\newtheorem{remark}{Remark}
\begin{document}

%%
%% The "title" command has an optional parameter,
%% allowing the author to define a "short title" to be used in page headers.
\title{Embedding Integer Lattices as Ideals into  Polynomial Rings}

%%
%% The "author" command and its associated commands are used to define
%% the authors and their affiliations.
%% Of note is the shared affiliation of the first two authors, and the
%% "authornote" and "authornotemark" commands
%% used to denote shared contribution to the research.
\author{Yihang Cheng, Yansong Feng and Yanbin Pan}
\affiliation{%
  \institution{Key Laboratory of Mathematics Mechanization, Academy of Mathematics and Systems Science, Chinese Academy of Sciences\\
  School of Mathematical Sciences, University of Chinese Academy of Sciences}
  \streetaddress{Zhongguancun East Road 55}
  \city{Beijing 100190}
  \country{China}
}
\email{chengyihang15@mails.ucas.ac.cn, {fengyansong, panyanbin}@amss.ac.cn}
%\author{Yansong Feng}
%\email{fengyansong@amss.ac.cn}
%\affiliation{%
  %\institution{Key Laboratory of Mathematics Mechanization, Academy of Mathematics and Systems Science, Chinese Academy of Sciences, Beijing, China \\
%School of Mathematical Sciences, University of Chinese Academy of Sciences, Beijing, China}
  %\streetaddress{Zhongguancun East Road 55}
  %\city{Beijing}
  %\country{China}
  %\postcode{100190}
%}

%\author{Yanbin Pan}
%\email{ panyanbin@amss.ac.cn}

%\affiliation{%
  %\institution{Key Laboratory of Mathematics Mechanization, Academy of Mathematics and Systems Science, Chinese Academy of Sciences, Beijing, China \\
%School of Mathematical Sciences, University of Chinese Academy of Sciences, Beijing, China}
  %\streetaddress{Zhongguancun East Road 55}
  %\city{Beijing}
  %\country{China}
  %\postcode{100190}
%}

%%
%% By default, the full list of authors will be used in the page
%% headers. Often, this list is too long, and will overlap
%% other information printed in the page headers. This command allows
%% the author to define a more concise list
%% of authors' names for this purpose.
\renewcommand{\shortauthors}{Cheng et al.}

%%
%% The abstract is a short summary of the work to be presented in the
%% article.
\begin{abstract}
  Many lattice-based crypstosystems employ ideal lattices for high efficiency. However, the additional algebraic structure of ideal lattices usually makes us worry about the security, and it is widely believed that the algebraic structure will help us solve the hard problems in ideal lattices more efficiently. In this paper, we study the additional algebraic structure of ideal lattices further and find that a given ideal lattice in a polynomial ring can be embedded as  an ideal into infinitely many different polynomial rings by the coefficient embedding. We design an algorithm to verify whether a given full-rank lattice in $\mathbb{Z}^n$ is an ideal lattice and output all the polynomial rings that the given lattice can be embedded into as an ideal with time complexity $\mathcal{O}(n^3B(B+\log n)$, where $n$ is the dimension of the lattice and $B$ is the upper bound of the bit length of the entries of the input lattice basis. We would like to point out that Ding and Lindner proposed an algorithm for identifying ideal lattices and outputting a single polynomial ring that the input lattice can be embedded into with time complexity $\mathcal{O}(n^5B^2)$ in 2007. However, we find a flaw in Ding and Lindner's algorithm that causes some ideal lattices can't be identified by their algorithm.
  
  %In 2007, Ding and Linder gave an algorithm for identifying ideal lattices with the time complexity $\mathcal{O}(n^5B^2)$. However, we find there is a flaw in Ding and Linder's algorithm. Some ideal lattices can't be identified as ideal lattices in Ding and Linder's algorithm and we give a non-trivial example. Ignoring the flaw, our algorithm also has two advantages over Ding and Linder's algorithm: 1. Our algorithm is more efficient with the time complexity $\mathcal{O}(n^3B(B+\log n)$; 2. Ding and Linder's algorithm only presented one polynomial ring while ours reveals all the polynomial rings that the given lattice can be embedded into as an ideal.
\end{abstract}

%%
%% The code below is generated by the tool at http://dl.acm.org/ccs.cfm.
%% Please copy and paste the code instead of the example below.
%%
\begin{CCSXML}
<ccs2012>
   <concept>
       <concept_id>10002950.10003624</concept_id>
       <concept_desc>Mathematics of computing~Discrete mathematics</concept_desc>
       <concept_significance>500</concept_significance>
       </concept>
   <concept>
       <concept_id>10002978.10002979.10002985</concept_id>
       <concept_desc>Security and privacy~Mathematical foundations of cryptography</concept_desc>
       <concept_significance>500</concept_significance>
       </concept>
   <concept>
       <concept_id>10003752.10003809</concept_id>
       <concept_desc>Theory of computation~Design and analysis of algorithms</concept_desc>
       <concept_significance>500</concept_significance>
       </concept>
 </ccs2012>
\end{CCSXML}

\ccsdesc[500]{Mathematics of computing~Discrete mathematics}
\ccsdesc[500]{Security and privacy~Mathematical foundations of cryptography}
\ccsdesc[500]{Theory of computation~Design and analysis of algorithms}

%%
%% Keywords. The author(s) should pick words that accurately describe
%% the work being presented. Separate the keywords with commas.
\keywords{Ideal lattice, Coefficient embedding, Complexity}

%%

%%
%% This command processes the author and affiliation and title
%% information and builds the first part of the formatted document.
\maketitle

\section{Introduction}
\subsection{The Development of Ideal Lattices}
The research on lattice-based cryptography was pioneered by Ajtai \cite{ref1} in 1996. He presented a family of one-way function  based on the Short Integer Solution (SIS) problem, which has the average-case hardness under the worst-case assumptions for some lattice problems. In 1997,  Ajtai and Dwork \cite{ref2} introduced a public-key cryptosystem, whose average-case security can be based on the worst-case hardness of the unique-Shortest Vector Problem.  In 2005,  Regev \cite{ref4} proposed another problem with average-case hardness,  the Learning with Errors problem (LWE), and also a  public-key encryption scheme based on LWE. Because of the average-case security, lattice-based cryptography has drawn considerable attentions from then on. 

Although there have been many cryptographic schemes based on LWE and SIS,   the main drawback of such schemes is their limited efficiency, due to its large key size and slow computations. Especially, as  the development of quantum computers, it becomes more urgent to design more practical lattice-based cryptosystems, since lattice-based cryptosystems are widely believed to be quantum-resistant.  
To improve the efficiency, additional algebraic structure is involved in the  lattice to construct more practical schemes. Among them,  ideal lattice plays an important role. 
     
In fact, as early as in 1998, Hoffstein, Pipher, and Silverman \cite{ref3} introduced a lattice-based public-key encryption scheme known as NTRU, whose security is related to the ideal in the ring $\mathbb{Z}[x]/(x^n-1)$. Due to the cyclic structure of the ideal lattice, the efficiency of NTRU is very high. Later, in 2010, Lyubashevsky, Peikert and Regev \cite{ref5} presented a ring-based variant of LWE, called Ring-LWE, whose average-case  hardness is based on worst-case assumptions on ideal lattices. In 2017, Peikert, Regev and Stephens-Davidowitz \cite{ref6} refined the proof of the security of Ring-LWE for more algebraic number field. After the introduction of Ring-LWE, more and more practical cryptosystems based on ideal lattices have be constructed.

     There are two different ways to define ideal lattices. 
     
     One is induced by the coefficient embedding from ring $\mathbb{Z}[x]/f(x)$ into $\mathbb{Z}^n$. NTRU uses coefficient embedding to define its lattice. It is very convenient to implement cryptosystems based on  Ring-LWE with the coefficient embedding. In fact, almost all the ideal lattice-based cryptosystems are implemented via the coefficient embedding. However, it seems not easy to clarify the hardness of  problems for the coefficient-embedding ideal lattice in general.

     The other one is defined by the canonical embedding from the algebraic integer ring of some number field $K$ into $\mathbb{C}^n$. This type of ideal lattice is usually employed in the security proof or hardness reduction in Ring-LWE based cryptography.
     
     It is widely believed that the additional algebraic structure of ideal lattice will help us solve its hard problems  more efficiently. 
     
     In 2016, Cramer, Ducas, Peikert and Regev \cite{ref7} introduced a polynomial-time quantum algorithm to solve $2^{\sqrt{n\text{log}n}}$-SVP in principal ideal lattices in the algebraic integer ring of $\mathbb{Q}(\zeta_m)$, where $m$ is a power of some prime. In 2017, Cramer, Ducas and Wesolowski \cite{ref8} extended the result to general ideals. In the same year,  Holzer, Wunderer and Buchmann \cite{ref9} extended the field to be $\mathbb{Q}(\zeta_m)$, where $m=p^aq^b$ and $p$, $q$ are different primes. 
     
     In 2019, Pellet-Mary, Hanrot and Stehl\'{e} \cite{ref10} introduced a pre-processing method (PHS algorithm) to solve $\gamma$-SVP for ideal lattices in any number field. The pre-processing phasing takes exponential time. Let $n$ be the dimension of the number field $K$ viewed as a $\mathbb{Q}$-vector space. Pellet-Mary \textit{et al.} showed that by performing pre-processing on $K$ in exponential time, their algorithm can, given any ideal lattice $I$ of $O_K$, for any $\alpha \in [0,1/2]$ output a $\exp(\widetilde{O}((n\log n)^{\alpha+1}/n))$ approximation of a shortest none-zero vector of $I$ in time $\exp(\widetilde{O}((n\log n)^{1-2\alpha}/n))+T$.  For the classical method, $T=\exp(\widetilde{O}((n\log n)^{1/2})$ if $K$ is  a cyclotomic field or $T=\exp(\widetilde{O}((n\log n)^{2/3})$ for an arbitrary number field $K$.
     
     In 2020, Bernard and Roux-Langlois \cite{ref19} proposed a new “twisted” version of the PHS  algorithm. They proved that Twisted-PHS algorithm performs at least as well as the original PHS algorithm and their algorithm suggested that much better approximation factors were achieved. In 2022, Bernard,  Lesavourey,  Nguyen and  Roux-Langlois \cite{ref20} extended the experiments of \cite{ref19} to cyclotomic fields of degree up to 210 for most conductors $m$. 
     
 In 2021, Pan, Xu, Wadleigh and Cheng \cite{ref21} found the connection between the complexity of the shortest vector problem (SVP) of prime ideals in number fields and their decomposition groups, and revealed lots of weak instances of  ideal lattices in which SVP can be solved efficiently. In 2022, Boudgoust, Gachon and Pellet-Mary \cite{ref22} generalized the work of Pan \textit{et al.} \cite{ref21} and provided a simple condition under which an ideal lattice defines an easy instance of the shortest vector problem. Namely, they showed that the more automorphisms stabilize the ideal, the easier it was to find a short vector in it.
 
 As mentioned above, almost all the research on SVP is in the canonical-embedding ideal lattices and the research on SVP in the coefficient-embedding ideal lattices is few.  However, in some rings, such as $\mathbb{Z}[X]/(x^n+1)$ where $n=2^k$ for $k\geq 1$, the SVPs induced by the two different embeddings are almost equal. We refer to  \cite{ref11} for more details.

\subsection{Our contribution}
In this paper, our main contribution is to find that an ideal lattice in the ring $\mathbb{Z}[x]/f(x)$ can be embedded into infinitely many rings $\mathbb{Z}[x]/g(x)$ as ideals, where $f(x)$ and $g(x)$ are monic and $f(x)$, $g(x)\in\mathbb{Z}[x]$ (Theorem~\ref{kthm}). Besides, corresponding to our finding, we show an efficient algorithm for computing all the rings that an ideal lattice can be embedded into as ideals and also judging whether a given integer lattice can be embedded as an ideal into a polynomial ring like $\mathbb{Z}[x]/f(x)$ with time complexity $\mathcal{O}(n^3B(B+\log n)$, where $n$ is the dimension of the lattice and $B$ is the upper bound of the bit length of the entries of the input lattice basis (Algorithm~\ref{alg:identify}).
%In this paper, we find an interesting property of the coefficient-embedding ideal lattice. Our main contribution is to show that an ideal lattice in the ring $\mathbb{Z}[x]/f(x)$, where $f(x)$ is monic and $f(x)\in\mathbb{Z}[x]$, can be embedded into infinitely many rings $\mathbb{Z}[x]/g(x)$ as ideals of the rings $\mathbb{Z}[x]/g(x)$, where $g(x)$ is monic and $g(x)\in\mathbb{Z}[x]$ (Theorem 1). And then we give some promising applications of this property. At last, we show an efficient algorithm for computing all the rings that an ideal lattice can be embedded into as ideals and also judging whether a given integer lattice can be embedded as an ideal into a polynomial ring like $\mathbb{Z}[x]/f(x)$, where $f(x)$ is monic and $f(x)\in\mathbb{Z}[x]$.(Algorithm 1).

Although, in 2007, Ding and Lindner \cite{ref13} proposed an algorithm for identifying ideal lattice that output a single polynomial ring which the input lattice can be embedded into as an ideal with time complexity $\mathcal{O}(n^5B^2)$, we find that there is a flaw in Ding and Lindner's algorithm. More exactly, some ideal lattices can't be identified by their algorithm and we give a non-trivial example in Section 4. Besides, ignoring the flaw, our algorithm is more efficient and output more polynomial rings than Ding and Lindner's algorithm.

On one hand, our finding reveals that an ideal lattice in $\mathbb{Z}[x]/f(x)$ can be viewed as an ideal lattice in $\mathbb{Z}[x]/g(x)$ for infinitely many different $g(x)$ and it is widely believed that some additional algebraic structures may lead a more efficient algorithm to solve the hard problems in ideal lattice than  general lattices, such as \cite{ref7}, \cite{ref10}.  Hence, we may embed the given ideal lattice into a well-studied ring as an ideal lattice and use the algebraic structure of the well-studied ring to solve the hard lattice problems more efficiently.

On the other hand, we test the proportion of ideal lattices in plain integer lattices by our algorithm and find that the proportion decreases very fast with the increase of the lattice dimension and upper bound of the bit length of the entries of the input lattice basis. Our test data indicates that the ideal lattice is actually very rare.

Finally, we provide an efficient open source implementation of our algorithm for identifying ideal lattices in SageMath. The source code is available at:\begin{center}
  \textcolor{blue}{\url{https://github.com/fffmath/Identifying-Ideal-Lattice}}.
\end{center}
With this implementation, we conducted several experiments, and the experimental results are presented in Appendix~\ref{AppendixA}.

\subsection{Roadmap}The paper is organized as follows. In Section 2, some preliminaries are presented. In Section 3, we show embedding relation between integer lattices and polynomial rings, and the theoretic basis of Algorithm \ref{alg:identify} is also presented. In Section 4, we propose the algorithm for identifying a coefficient-embedding ideal lattice together with the complexity analysis and the comparison to Ding and Lindner's algorithm. The appendix contains our experimental results and reference.

\section{Preliminaries}
\subsection{Notation}

In this paper we denote by $\mathbb{C}$, $\mathbb{R}$, $\mathbb{Q}$ and $\mathbb{Z}$ the complex number field, the real number field, the rational number field and the integer ring respectively.

We denote a matrix by a  capital letter in bold and denote a vector by a lower-case letter in bold. To  represent the element of a matrix, we use the lower-case letter. For example,  the element of matrix $\mathbf{A}$ at the $i$-th row and $j$-th column is denoted by $a_{ij}$, while its $i$-th row is denoted by $\mathbf{a}_i$. Since we have the inner products in $\mathbb{R}^n$ and $\mathbb{C}^n$ respectively, we can define the norm of vectors,  that is, $ \Vert \mathbf{v} \Vert :=<\mathbf{v},\mathbf{v}>$ in $\mathbb{R}^n$ and  $ \Vert \mathbf{v} \Vert :=<\mathbf{v},\overline{\mathbf{v}}>$ in $\mathbb{C}^n$.

For two integers $a$ and $b$, $a| b$ means that $b$ is divisible by $a$. Otherwise, we write $a\not| \ b$. For integer $a$ and a matrix $\mathbf{A}$, $a| \mathbf{A}$ means that every entry of $\mathbf{A}$ can be divisible by $a$.
%Given a set $H$ of vectors, span($H$) means the space generated by the elements of $H$. For example, $\text{span}(\mathbf{A})=\{\sum_{i=1}^n x_i\mathbf{a}_i : x_i\in\mathbb{R}\}$. Denote by dim(span($H$)) the dimension of  span($H$).

%Denote by $\mathcal{B}(x,r)$ the ball centered at $x$ with radius $r$. 

For a polynomial $f(x)\in\mathbb{Z}[x]$, denote by $\mathbb{Z}[x]/f(x)$ for simplicity the quotient ring $\mathbb{Z}[x]/(f(x)\mathbb{Z}[x])$.

For a map $\sigma$, and a set $S$, denote by $\sigma(S)$ the set $\{\sigma(x):x\in S\}$.

\subsection{Lattice}
Lattices are  discrete subgroups of $ \mathbb{R}^m$, or equivalently,
\begin{definition}{(Lattice)}
Given n linearly independent vectors $\mathbf{B}=\begin{pmatrix}\mathbf{b}_1 \\ \mathbf{b}_2 \\\vdots \\ \mathbf{b}_n\end{pmatrix}$, where $\mathbf{b}_i\in\mathbb{R}^m$, the lattice $\mathcal{L}(\mathbf{B})$ generated by $\mathbf{B}$ is defined as  follows: \[\mathcal{L}(\mathbf{B})=\{\sum_{i=1}^n x_i\mathbf{b}_i : x_i\in\mathbb{Z}\}=\{\mathbf{xB} : \mathbf{x}\in\mathbb{Z}^n\}.\] 	
\end{definition}
We call $\mathbf{B}$ a basis of $\mathcal{L}(\mathbf{B})$, $m$ and $n$ the dimension and the rank of $\mathcal{L}(\mathbf{B})$ respectively. When $m=n$, we say $\mathcal{L}(\mathbf{B})$ is full-rank.

When $n>1$, there are infinitely many bases for a lattice $\mathbf{\mathcal{L}}$, and any two bases are related to each other by a unimodular matrix, which is an invertible integer matrix. More precisely, given a lattice $\mathcal{L}(\mathbf{B}_1)$, $\mathbf{B}_2$ is also a basis of the lattice if and only if there exists a unimodular matrix $\mathbf{U}$ s.t. $\mathbf{B}_1=\mathbf{U} \mathbf{B}_2$.

\paragraph{Hard problems in lattices}

The shortest vector problem (SVP)  is one  of the most famous hard problems in lattices.

SVP is the  question of finding a nonzero shortest  vector in a given lattice $\mathcal{L}$, whose length is denoted by $\lambda_1(\mathcal{L})$. 
The approximating-SVP with factor $\gamma$, denoted by $\gamma$-SVP, asks to find a short nonzero lattice vector $\mathbf{v}$ such that $$\|\mathbf{v}\|\le\gamma\cdot\lambda_1(\mathcal{L}).$$

In fact, The hardness of $\gamma$-SVP depends on $\gamma$. When $\gamma=1$, $\gamma$-SVP is exactly the original SVP, and for constant $\gamma$, this problem is known to be NP-hard under randomized reduction \cite{ajtai98}. 
Many cryptosystems are based on the hardness of (decision) $\gamma$-SVP when $\gamma$ is in polynomial size. By now we have not found any  polynomial-time classical algorithm to deal with such cases. The existing polynomial algorithms such as LLL \cite{ref14}, BKZ \cite{ref15} can find the situation when $\gamma =\text{exp}(n)$.

 \subsection{Hermite Normal Form And Smith Normal Form}
For the integer matrix, there is a very important standard form known as the Hermite Normal Form (HNF). For simplicity, we just present the definition of HNF for the non-singular integer matrix.
\begin{definition}{(Hermite Normal Form)} A non-singular matrix $\mathbf{H}\in\mathbb{Z}^{n \times n}$ is said to be in HNF, if
	\begin{itemize}
		\item $h_{i,i} > 0$ for $1\leq i\leq n.$
		\item $h_{j,i} =0 $ for $1\leq j < i \leq n.$
		\item $0 \leq h_{j,i}< h_{i,i}$ for $1\leq i< j\leq n.$
	\end{itemize}
\end{definition}

The Hermite Normal Form has some important properties. See \cite{DKT87,MW01,LP19} for more details. 
\begin{lemma}
For any  integer matrix $\mathbf{A}$, there exists a unimodular matrix $\mathbf{U}$ such that $\mathbf{H}$=$\mathbf{U} \mathbf{A}$ is in HNF. Moreover, HNF can be computed in polynomial time.
\end{lemma}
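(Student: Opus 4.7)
The plan is to prove both existence and polynomial-time computability constructively, by describing an algorithm that uses iterated unimodular row operations. The key atomic operation stems from Bezout's identity: for any integers $a,b$ with $d=\gcd(a,b)>0$, there exist integers $s,t$ with $sa+tb=d$, and then the $2\times 2$ integer matrix with rows $(s,t)$ and $(-b/d,\,a/d)$ has determinant $1$, hence is unimodular. Embedding this $2\times 2$ block into the $n\times n$ identity matrix in two chosen row positions yields a unimodular row operation that replaces the entries $(a,b)$ in a chosen column by $(d,0)$, while the contents of the two rows are mixed linearly across the remaining columns.

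First I would use this atomic operation to reduce $\mathbf{A}$ to a lower-triangular shape with positive diagonal, as dictated by the first two conditions of the HNF definition. Processing columns from right to left, in column $i$ I would iterate the Bezout step on the entries located above row $i$ to collapse them all to zero, leaving the GCD of that upper portion on the diagonal at position $(i,i)$; a sign flip of the pivot row, also unimodular, then enforces $h_{i,i}>0$. Non-singularity of $\mathbf{A}$ ensures that no pivot vanishes, so after $n$ column sweeps the working matrix is lower triangular with positive diagonal. Second, for each off-diagonal position $(j,i)$ with $j>i$, I would subtract $\lfloor h_{j,i}/h_{i,i}\rfloor$ times row $i$ from row $j$ to reduce $h_{j,i}$ into the interval $[0,h_{i,i})$; executing these sweeps column by column from right to left preserves both the triangular structure and the ranges already enforced in columns to the right. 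The product of all these elementary unimodular operations is the desired $\mathbf{U}$, and $\mathbf{H}=\mathbf{U}\mathbf{A}$ satisfies all three HNF conditions, giving existence.

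The main obstacle is the polynomial-time claim rather than correctness: a naive execution of the Bezout sweep can cause intermediate entries to blow up doubly-exponentially in the input bit length, because each row combination sums rows whose own entries may already have grown significantly in earlier reductions. The standard remedy, which I would invoke, is to carry out all intermediate arithmetic modulo a pre-computed positive multiple $M$ of $|\det(\mathbf{A})|$ obtained from the Hadamard bound; reducing every entry modulo $M$ after each atomic step keeps all intermediate numbers of polynomial bit length and does not alter the final HNF, because $M\mathbb{Z}^n$ is contained in the row lattice of $\mathbf{A}$. Combined with the observations that each atomic step is polynomial in the bit length and that only polynomially many atomic steps are required, this gives the polynomial-time bound; for a fully rigorous treatment I would refer to the analyses in \cite{DKT87,MW01,LP19}.
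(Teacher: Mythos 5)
The paper does not actually prove this lemma; it simply points to \cite{DKT87,MW01,LP19}, and your constructive sketch is exactly the standard argument those references contain: Bezout-based $2\times 2$ unimodular blocks to triangularize column by column, size reduction of the sub-diagonal entries, and the Domich--Kannan--Trotter modular trick to control intermediate growth. The existence part of your argument is complete and correct (non-singularity does guarantee a nonzero pivot in each column sweep, and the right-to-left reduction order does preserve what was already normalized), so on that half you are fine.

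One step in the polynomial-time half is stated too loosely, though. Reducing \emph{every} entry modulo $M$ after each atomic step is not by itself harmless: replacing a row $\mathbf{r}$ by $\mathbf{r}\bmod M$ subtracts a vector of $M\mathbb{Z}^n\subseteq\mathcal{L}(\mathbf{A})$, so the new rows still lie in the lattice, but they need not \emph{generate} it any more --- in the extreme case $\mathbf{A}=(1)$, $M=1$, blind reduction turns the matrix into $0$ and the computed ``HNF'' is wrong. The inclusion $M\mathbb{Z}^n\subseteq\mathcal{L}(\mathbf{A})$ guarantees that the lattice cannot \emph{grow}, not that it cannot shrink. The standard repair, which is what \cite{DKT87} actually does, is to work with the row lattice of $\begin{pmatrix}\mathbf{A}\\ M\mathbf{I}_n\end{pmatrix}$ (equivalently, to keep the rows $M\mathbf{e}_i$ available, or to re-insert $M$ whenever a pivot position vanishes), so that every modular reduction is an honest row operation inside a generating set of $\mathcal{L}(\mathbf{A})$ and the final HNF is unchanged by the uniqueness of the HNF basis; a final unimodular $\mathbf{U}$ can then be recovered as $\mathbf{H}\mathbf{A}^{-1}$. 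Since you defer the rigorous treatment to the same references the paper cites, this is a fixable imprecision rather than a wrong approach, but as written the sentence ``reducing every entry modulo $M$ \dots does not alter the final HNF'' needs that augmentation to be true. (Minor side remarks: the growth of the naive algorithm is usually quantified as exponential bit length in $n$ rather than ``doubly exponential in the input bit length,'' and your proof covers the non-singular case, which is the only case for which the paper defines HNF, so the restriction is consistent with the statement's intended scope.)
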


For integer lattices, we have 
\begin{lemma}\label{hnfbasis}
For any lattice $\mathcal{L}\subset \mathbb{Z}^n$, there exists a unique basis $\mathbf{H}$ in HNF. We call $\mathbf{H}$ the HNF basis of $\mathcal{L}$.
\end{lemma}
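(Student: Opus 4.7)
The plan is to handle existence and uniqueness separately. For existence, I would apply the previous lemma directly: take any basis $\mathbf{B}$ of $\mathcal{L}$, obtain a unimodular $\mathbf{U}$ with $\mathbf{H} = \mathbf{U}\mathbf{B}$ in HNF, and conclude that $\mathbf{H}$ is still a basis of $\mathcal{L}$ since $\mathbf{U}$ is invertible over $\mathbb{Z}$ and hence the row space over $\mathbb{Z}$ is preserved.

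For uniqueness, suppose $\mathbf{H}_1$ and $\mathbf{H}_2$ are both HNF bases of $\mathcal{L}$. Since the two bases generate the same lattice, there is a unimodular $\mathbf{U}$ with $\mathbf{H}_1 = \mathbf{U}\mathbf{H}_2$. I would prove $\mathbf{U} = \mathbf{I}$ by showing row by row that the $i$-th row of $\mathbf{U}$ is the $i$-th standard basis vector. Fix $i$. The key structural fact is that by the HNF definition $\mathbf{H}_1,\mathbf{H}_2$ are lower triangular with strictly positive diagonal. In particular, row $i$ of $\mathbf{H}_1$ has zeros in columns $i+1,\dots,n$, while for $j>i$ the only row of $\mathbf{H}_2$ contributing a nonzero entry in column $j$ among rows $\le j$ is row $j$ itself, via the positive diagonal entry $h^{(2)}_{j,j}$. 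Writing out $\mathbf{h}^{(1)}_i = \sum_j u_{i,j}\mathbf{h}^{(2)}_j$ and evaluating at columns $n, n-1, \ldots, i+1$ in descending order, each step forces $u_{i,n} = u_{i,n-1} = \cdots = u_{i,i+1} = 0$. Reading the equation at column $i$ then gives $h^{(1)}_{i,i} = u_{i,i}\, h^{(2)}_{i,i}$, and applying the symmetric argument to $\mathbf{H}_2 = \mathbf{U}^{-1}\mathbf{H}_1$ yields $u_{i,i} = \pm 1$; positivity of both diagonals forces $u_{i,i} = 1$.

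It remains to kill the coefficients $u_{i,j}$ for $j<i$, and this is where the reduction condition $0 \le h_{j,k} < h_{k,k}$ for sub-diagonal entries is essential. Reading the expression for $\mathbf{h}^{(1)}_i$ at column $i-1$, every already-killed coefficient drops out and one obtains $h^{(1)}_{i,i-1} - h^{(2)}_{i,i-1} = u_{i,i-1}\, h^{(2)}_{i-1,i-1}$. Since both $h^{(1)}_{i,i-1}$ and $h^{(2)}_{i,i-1}$ lie in $[0,\, h^{(2)}_{i-1,i-1})$, their difference has absolute value strictly less than $h^{(2)}_{i-1,i-1}$, forcing $u_{i,i-1} = 0$. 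Iterating the same maneuver at columns $i-2, i-3, \ldots, 1$ kills the remaining $u_{i,j}$, which completes row $i$, and iterating over $i$ completes the proof. I expect the main obstacle to be the bookkeeping in this second descent: one must track that at each column $k<i$ all previously determined coefficients $u_{i,k+1},\ldots,u_{i,i-1}$ have already been shown to vanish so that the equation cleanly reduces to $h^{(1)}_{i,k} - h^{(2)}_{i,k} = u_{i,k}\, h^{(2)}_{k,k}$, after which the reduction inequality closes the step.
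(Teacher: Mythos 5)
Your proof is correct: existence follows exactly as you say from the preceding lemma, and your row-by-row elimination of $\mathbf{U}$ (descending columns to kill $u_{i,j}$ for $j>i$, the symmetric/determinant argument giving $u_{i,i}=1$, then the reduction inequalities $0\le h_{j,k}<h_{k,k}$ forcing the sub-diagonal $u_{i,k}$ to vanish) is the standard uniqueness argument; the paper itself states this lemma without proof, referring to the cited HNF literature. The only point worth making explicit is that you should first run the $j\ge i$ part of the argument for every row, so that the diagonal equality $h^{(1)}_{k,k}=h^{(2)}_{k,k}$ is available for all $k$ before you compare both sub-diagonal entries against the same modulus $h^{(2)}_{k,k}$ in the final descent.
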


  Sometimes we do not need the whole HNF of an integer matrix. So we  introduce the Incomplete Hermite Normal Form of an integer matrix, which is also  a special basis of the integer lattice.
  
  \begin{definition}{(Incomplete Hermite Normal Form)} A non-singular matrix $\mathbf{B}\in\mathbb{Z}^{n \times n}$ is said to be in Incomplete Hermite Normal Form, if
  \begin{itemize}
  	\item $b_{n,n}>0$;
  	\item $b_{i,n} = 0 \mbox{ for } 1\leq i\leq n-1.$
  \end{itemize}
  \end{definition}

Given a full-rank integer matrix  $\mathbf{B}$,
\[ \mathbf{B}= \begin{pmatrix} b_{1,1}&b_{1,2}&\cdots&b_{1,n} \\ b_{2,1}&b_{2,2}&\cdots&b_{2,n} \\ \vdots&\vdots&\ddots&\vdots \\b_{n,1}&b_{n,2}&\cdots&b_{n,n} \end{pmatrix},\]
it is well known that by the Extended  Euclidean Algorithm we can find a unimodular matrix $\mathbf{U}$, such that 
\[ \mathbf{U} \begin{pmatrix} b_{1,n} \\ b_{2,n} \\ \vdots \\b_{n,n} \end{pmatrix} = \begin{pmatrix}0 \\ 0 \\ \vdots \\d \end{pmatrix},\]
where $d = \gcd( b_{1,n}, b_{2,n},...,b_{n,n})$. Then we have 
$$ \mathbf{B}'= \mathbf{U} \mathbf{B} =  \begin{pmatrix} \mathbf{D}&\mathbf{0} \\ \mathbf{b'}& d \end{pmatrix} $$
is in Incomplete Hermite Normal Form, where  $\mathbf{D} \in \mathbb{Z}^{(n-1) \times (n-1)}$, $\mathbf{b}' \in \mathbb{Z}^{n-1}$. 

About the Incomplete Hermite Normal Form, it is easy to conclude the following lemma. So we omit the proof.

\begin{lemma} \label{ihnflemma}
	For any non-singular matrix $\mathbf{B}\in\mathbb{Z}^{n \times n}$, the following properties are satisfied:
	\begin{itemize}
		\item we can find a unimodular matrix $\mathbf{U}$ in polynomial time, such that $ \mathbf{B}'= \mathbf{U} \mathbf{B}$ is in Incomplete Hermite Normal Form.
		\item For any unimodular matrix $\mathbf{U}$ and $\mathbf{V}$ such that $ \mathbf{B}'= \mathbf{U} \mathbf{B}$ and $ \mathbf{B}''= \mathbf{V} \mathbf{B}$ both in Incomplete Hermite Normal Form,  $ \mathbf{B}'$ and $ \mathbf{B}''$ are not necessarily equal, but $$b'_{n,n} = b''_{n,n} = \gcd( b_{1,n}, b_{2,n},...,b_{n,n}).$$  Specially, notice that the HNF  $\mathbf{H}$ of $\mathbf{B}$ is also in  Incomplete Hermite Normal Form. We immediately have
		$$ h_{n,n} = \gcd( b_{1,n}, b_{2,n},...,b_{n,n}).$$
	\end{itemize}
\end{lemma}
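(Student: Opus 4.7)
The first bullet is essentially contained in the paragraph immediately preceding the lemma: running the Extended Euclidean Algorithm on the last column $(b_{1,n},\dots,b_{n,n})^{T}$ of $\mathbf{B}$ produces, in polynomial time, a unimodular $\mathbf{U}$ whose action takes this column to $(0,\dots,0,d)^{T}$ with $d=\gcd(b_{1,n},\dots,b_{n,n})$, and a sign adjustment on the last row of $\mathbf{U}$ ensures $d>0$. Left-multiplying $\mathbf{B}$ by this $\mathbf{U}$ yields $\mathbf{B}'=\mathbf{U}\mathbf{B}$ in Incomplete Hermite Normal Form, so my plan is simply to cite that construction rather than redo it.

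For the second bullet I would argue lattice-theoretically rather than by manipulating matrices. Since $\mathbf{U}$ and $\mathbf{V}$ are unimodular, $\mathbf{B}'=\mathbf{U}\mathbf{B}$ and $\mathbf{B}''=\mathbf{V}\mathbf{B}$ are bases of the same row-lattice $\mathcal{L}=\mathcal{L}(\mathbf{B})\subset\mathbb{Z}^{n}$. Let $\pi:\mathbb{Z}^{n}\to\mathbb{Z}$ denote the projection onto the last coordinate; its image $\pi(\mathcal{L})$ is a subgroup of $\mathbb{Z}$, hence of the form $d\,\mathbb{Z}$ for a unique integer $d\ge 0$. I would then compute this $d$ in two ways. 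Using the rows of $\mathbf{B}$, the image $\pi(\mathcal{L})$ is generated by $b_{1,n},\dots,b_{n,n}$, so $d=\gcd(b_{1,n},\dots,b_{n,n})$. Using the IHNF bases $\mathbf{B}'$ and $\mathbf{B}''$, only the last row has nonzero last coordinate, so $\pi(\mathcal{L})=b'_{n,n}\mathbb{Z}=b''_{n,n}\mathbb{Z}$; positivity of $b'_{n,n}$ and $b''_{n,n}$ then forces $b'_{n,n}=b''_{n,n}=d$. The final HNF claim is immediate because an HNF matrix is by definition in IHNF.

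I do not expect a serious obstacle here. The only delicate point worth stating explicitly is that $\mathbf{B}'$ and $\mathbf{B}''$ genuinely can differ in their first $n-1$ columns — the IHNF condition pins down only the last column — which is precisely why the lemma phrases the conclusion as $b'_{n,n}=b''_{n,n}$ rather than $\mathbf{B}'=\mathbf{B}''$. Everything else reduces to the standard facts that every subgroup of $\mathbb{Z}$ has a unique non-negative generator and that the row lattice is invariant under left-multiplication by unimodular matrices.
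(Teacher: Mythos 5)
Your proof is correct; the paper itself omits the proof of this lemma (deeming it easy), and the construction you cite for the first bullet is exactly the Extended-Euclidean argument given in the paragraph preceding the lemma. The lattice-projection argument for the second bullet (the last coordinates of $\mathcal{L}(\mathbf{B})$ form the subgroup $\gcd(b_{1,n},\dots,b_{n,n})\mathbb{Z}$ of $\mathbb{Z}$, which any IHNF basis generates via its positive entry $b'_{n,n}$) is a clean and complete justification of the uniqueness claim, including the HNF special case.
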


\begin{definition}{(Smith Normal form)}Let $\mathbf{A}$ be nonzero $m\times n$ matrix over a principal ideal domain $R$, there exist invertible $m\times m$ and $n\times n$-matrices $\mathbf{P},\mathbf{T}$ (with coefficients in $R$) such that the product \[\mathbf{S}=\mathbf{PAT}=\begin{pmatrix}\alpha_1&0&0&\cdots&0\\0&\alpha_2&0&\cdots&0\\0&0&\ddots&\ddots&0\\\vdots&\vdots&\ddots&\alpha_r&\vdots\\\vdots&\vdots&0&0&\vdots\\0&\cdots&\cdots&\cdots& 0\end{pmatrix}\] And the diagonal elements satisfy $\alpha_i|\alpha_{i+1}$ for all $1\leq i < r$. $\mathbf{S}$ is the Smith Normal Form of $\mathbf{A}$, and the elements $\alpha _{i}$ are unique up to multiplication by a unit in $R$ and are called the elementary divisors, invariants, or invariant factors.
\end{definition}

\begin{definition}{(Smith Massager)}Let $\mathbf{A}\in \mathbb{Z}^{n\times n}$ be a non-singular integer matrix with Smith Normal Form $\mathbf{S}$. A matrix $\mathbf{M}\in\mathbb{Z}^{n\times n}$ is a Smith Massager for $\mathbf{A}$ if 

(i) it satisfies that $\mathbf{AM}\equiv 0$ cmod $\mathbf{S}$, and

(ii) there exists a matrix $\mathbf{W}\in \mathbb{Z}^{n\times n} $ such that $\mathbf{WM}\equiv \mathbf{I_n}$ \text{cmod} $\mathbf{S}$.
\end{definition}
\begin{definition}{(cmod)}
Given $\mathbf{B}\in\mathbb{Z}^{m\times n}$ and $\mathbf{S}\in\mathbb{Z}^{n\times n}$, where \[\mathbf{B}=\begin{pmatrix}
    \mathbf{b_1}&\mathbf{b_2}&\cdots&\mathbf{b_n}
\end{pmatrix}\]
\[\mathbf{S}=\begin{pmatrix}
    s_1&0&\cdots&0\\0&s_2&\cdots&0\\\vdots&\vdots&\ddots&\vdots\\0&0&\cdots&s_n
\end{pmatrix}\]
$\mathbf{b_i}$ is the $i$-th column of $\mathbf{B}$ and $\mathbf{S}$ is a diagonal matrix. 

$\mathbf{B}$ cmod $\mathbf{S}$ :=  $\begin{pmatrix}
    \mathbf{b_1}\mod{s_1}&\mathbf{b_2}\mod{s_2}&\cdots&\mathbf{b_n}\mod{s_n}
\end{pmatrix}$
\end{definition}

The definitions of Smith Normal form and Smith Massager will only be used in Theorem~\ref{LatMemthm}, Section~\ref{section:Identifying an Ideal Lattice}.
\subsection{Ideal lattices}

 An algebraic number field $K$ is an extension field of the rationals $\mathbb{Q}$ such that its dimension $ [K : \mathbb{Q}]$ as a $\mathbb{Q}$-vector space (i.e., its degree) is finite.

 An element $x$ in the algebraic number field $K$ is said to be integral over $\mathbb{Z}$ if the coefficients of the minimal polynomial of $x$ over $\mathbb{Q}$ are all integers. All the elements which are integral over $\mathbb{Z}$ in $K$ make up a set  $O_K$. $O_K$ is actually a ring called the algebraic integer ring of $K$ over $\mathbb{Z}$.

  $O_K$ is a finitely generated $\mathbb{Z}$-module of dimension $[K :\mathbb{Q}]$. The basis of $O_K$ as a $\mathbb{Z}$-module is called the integer basis, which is also a basis of $K$ as a $\mathbb{Q}$-vector space.
 
\paragraph{Canonical-embedding ideal lattice}

 If $\Omega\supset K$ is an extension field such that $\Omega$ is algebraically closed over $\mathbb{Q}$, then there are exactly $[K :\mathbb{Q}]$ field embeddings of $K$ into $\Omega$. For convenience, we regard $\Omega$ as the complex field $\mathbb{C}$.

Any ideal of $O_K$ is a full-rank submodule of $O_K$. Let $[K :\mathbb{Q}]=n$.  This structure induces a canonical embedding:
\begin{align*}
\Sigma: O_K&\rightarrow\mathbb{C}^n \\ a&\mapsto (\Sigma_i(a))_{i=1,...,n},
\end{align*}
where $\Sigma_i$'s are the $n$ different embeddings from $K$ into $\mathbb{C}$.

\begin{definition}{(Canonical-embedding Ideal Lattice)}
Given a number field $K$ and  any ideal I of $O_K$, $\Sigma$(I) is called the canonical-embedding ideal lattice.
\end{definition}

\paragraph{Coefficient-embedding ideal lattice} Denote by $\mathbb{Z}^{(n)}[x]$  the set of all the  polynomials in $\mathbb{Z}[x]$ with degree $\leq$$n-1$. We use the symbol $\sigma$ to represent the following linear map: 
\begin{align*} \sigma : \mathbb{Z}^{(n)}[x]&\rightarrow\mathbb{Z}^n \\ \sum_{i=1}^{n} a_ix^{i-1} &\mapsto (a_1,a_1,...,a_n), \end{align*}
where linear map means that 
\begin{itemize}
	\item For any $f(x)$, $g(x)\in \mathbb{Z}^{(n)}[x]$, $\sigma(f(x)+g(x)) = \sigma(f(x))+\sigma(g(x));$
	\item For any $f(x)\in \mathbb{Z}^{(n)}[x]$ and $z\in\mathbb{Z}$, $\sigma(zf(x)) = z\sigma(f(x)).$
\end{itemize}

We can also define its inverse, which is linear too:
\begin{align*} \sigma^{-1} : \mathbb{Z}^n &\rightarrow\mathbb{Z}^{(n)}[x]\\ (a_1,a_1,\cdots,a_n)&\mapsto \sum_{i=1}^{n} a_ix^{i-1}. \end{align*}

In what follows, we focus on ideal lattices induced by ideals of the ring $\mathbb{Z}[x]/f(x)$, where $f(x)$ is a monic polynomial of degree $n$.  Any element in $\mathbb{Z}^{(n)}[x]$ can be viewed as a  representative in the ring $\mathbb{Z}[x]/f(x)$ with $\text{degree}(f(x))\geq n$ \cite{ref13}.
So we abuse the symbol $\sigma$ to represent the  the following coefficient embedding. 
\begin{align*} \sigma : \mathbb{Z}[x]/f(x)&\rightarrow\mathbb{Z}^n \\ \sum_{i=1}^{n} a_ix^{i-1} &\mapsto (a_1,a_2,...,a_n). \end{align*} 

Therefore, under the coefficient embedding, any ideal of $\mathbb{Z}[x]/f(x)$ can be viewed as an integer lattice.

\begin{definition}{(Coefficient-embedding Ideal Lattice)}
Given $\mathbb{Z}[x]/f(x)$, where  $f(x)$ is a monic polynomial of degree n, and any ideal $I$ of $\mathbb{Z}[x]/f(x)$, $\sigma$(I) is called the coefficient-embedding ideal lattice, which is of course an integer lattice.
\end{definition}

%Roughly speaking,  the canonical-embedding ideal lattice are usually used in the theoretical analysis of lattice-related hard problems and lattice-based cryptosystems whereas the coefficient-embedding ideal lattices are usually used in the implementation of lattice-based cryptosystems. Most of the practical lattice-based cryptosystems are employing the  coefficient-embedding ideal lattices.

Roughly speaking, due to the abundant algebraic structures of the corresponding algebraic integer domains, the hard lattice problems in canonical-embedding ideal lattices are easier to analyse than that in coefficient-embedding ideal lattices. However, as we've introduced in the introduction, in some cases, the results in canonical-embedding ideal lattices can be converted to the results in the coefficient-embedding ideal lattices with small loss.

The following is an important property of ideal lattices, it was proposed by  Zhang, Liu and Lin \cite{ref16}. We present their proof detail for readers to check conveniently.

\begin{lemma}[\cite{ref16}] \label{klemma}
Let $\mathbf{H}$ be the HNF basis of the full-rank coefficient-embedding ideal lattice $\mathcal{L}(\mathbf{B})$ in the ring $\mathbb{Z}[x]/f(x)$.
 \[ \mathbf{H}= \begin{pmatrix} h_{1,1}&0&\cdots&0 \\ h_{2,1}&h_{2,2}&\cdots&0 \\ \vdots&\vdots&\ddots&\vdots \\h_{n,1}&\cdots&\cdots&h_{n,n} \end{pmatrix}.\] Then $h_{i,i}|h_{j,l}$, for ${1\leq l \leq j \leq i \leq n}$. Specially,
 $h_{n,n} \vert h_{i,j}$, ${i,j\leq n}$.
\end{lemma}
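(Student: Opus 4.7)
The plan is downward induction on $i$, exploiting that $\mathcal{L}$, coming from an ideal of $\mathbb{Z}[x]/f(x)$, is closed under the $x$-action: for $\mathbf{v}=\sigma(p)$ the vector $\sigma(xp)$ is a coordinate right-shift of $\mathbf{v}$ when $v_n=0$, and otherwise requires the identity $x^n\equiv-(a_0+a_1x+\cdots+a_{n-1}x^{n-1})\pmod{f}$. The HNF shape converts these shifted vectors back into divisibility statements, using both the lower-triangularity of $\mathbf{H}$ and the uniqueness of the HNF basis (Lemma~\ref{hnfbasis}).

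For the base case $i=n$, the last column of $\mathbf{H}$ is $(0,\ldots,0,h_{n,n})^T$, so every $\mathbf{v}\in\mathcal{L}$ has $v_n\in h_{n,n}\mathbb{Z}$. The shift $x^{n-j}\mathbf{h}_j$ needs no reduction and places $h_{j,j}$ at position $n$, so $h_{n,n}\mid h_{j,j}$. One further shift does require reduction, and the new position-$n$ entry is $h_{j,j-1}-h_{j,j}a_{n-1}$; combined with $h_{n,n}\mid h_{j,j}$ this gives $h_{n,n}\mid h_{j,j-1}$. An induction on the shift count $s$ then propagates to $h_{n,n}\mid h_{j,j-s}$ for every $s=0,\ldots,j-1$, proving the base case.

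For the inductive step, assume the lemma at levels $i+1,\ldots,n$. Decomposing $x\mathbf{h}_i\in\mathcal{L}$ in the HNF basis and reading off position $i+1$ yields the chain relation $h_{i+1,i+1}\mid h_{i,i}$ for free. Work inside the sublattice $\mathcal{L}_i:=\mathcal{L}\cap(\mathbb{Z}^i\times\{0\}^{n-i})$, spanned by $\mathbf{h}_1,\ldots,\mathbf{h}_i$, so every $\mathbf{v}\in\mathcal{L}_i$ has $v_i\in h_{i,i}\mathbb{Z}$. Define a $\mathbb{Z}$-linear operator $\widetilde{x}\colon\mathcal{L}_i\to\mathcal{L}_i$ by $\widetilde{x}\mathbf{v}:=x\mathbf{v}-(v_i/h_{i+1,i+1})\,\mathbf{h}_{i+1}$; the coefficient is an integer since $h_{i+1,i+1}\mid h_{i,i}\mid v_i$, and the output lies in $\mathcal{L}_i$ because $x\mathbf{v}$ is supported in positions $\le i+1$ (no reduction, as $\deg\le i\le n-1$) and the subtraction cancels position $i+1$. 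For each $j\le i$ and $s=0,\ldots,j-1$, the iterate $\widetilde{x}^{\,i-j+s}\mathbf{h}_j\in\mathcal{L}_i$ has position-$i$ coordinate equal to $h_{j,j-s}$ minus a $\mathbb{Z}$-combination of terms of the form $(h_{i,i}/h_{i+1,i+1})\cdot h_{i+1,\ast}$; by the inductive hypothesis $h_{i+1,i+1}\mid h_{i+1,\ast}$, so every correction is a multiple of $h_{i,i}$. Since that position-$i$ coordinate itself lies in $h_{i,i}\mathbb{Z}$, we conclude $h_{i,i}\mid h_{j,j-s}$; ranging over $(j,s)$ gives $h_{i,i}\mid h_{j,l}$ for all $l\le j\le i$.

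The main technical obstacle is in the base case: the iterated reductions modulo $f$ introduce cross-terms involving $a_0,\ldots,a_{n-1}$, and one must show that these are already multiples of $h_{n,n}$ using the divisibilities established at earlier shift counts. This is bookkeeping-heavy but entirely within reach of a straightforward induction on the shift count. By contrast, the inductive step is comparatively clean because it takes place inside $\mathcal{L}_i$, where the degree bound keeps every $x$-shift below $n$ and therefore needs no reduction, and the subtraction step is fully controlled by the inductive hypothesis together with the chain $h_{i+1,i+1}\mid h_{i,i}$.
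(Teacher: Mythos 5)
Your proof is correct, but it takes a genuinely different route from the paper's. The paper runs the induction \emph{upward} on $i$: at step $k$ it takes the single shift $x\sigma^{-1}(\mathbf{h}_k)$ (which has degree $\le k\le n-1$, so no reduction modulo $f$ is ever needed), writes the resulting lattice vector as $\sum_{i=1}^{k+1}y_i\mathbf{h}_i$, and solves the resulting triangular system: the top equation gives $h_{k+1,k+1}\mid h_{k,k}$, and the remaining ones express each $h_{k+1,l}$ as $h_{k+1,k+1}$ times a quotient that the inductive hypothesis ($h_{k,k}\mid h_{j,l}$ for $l\le j\le k$) shows to be an integer; the lower rows then follow by transitivity. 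Your argument instead runs \emph{downward}, proving the ``Specially'' clause $h_{n,n}\mid h_{j,l}$ first as a base case via iterated shifts that do invoke the reduction $x^n\equiv-(a_0+\cdots+a_{n-1}x^{n-1})$, and then handling level $i<n$ with the shift-and-project operator $\widetilde{x}$ on $\mathcal{L}_i$. Both steps of yours close: the base-case bookkeeping is most cleanly organized as a downward induction on the coefficient index, uniformly in the shift count (from $c^{(s+1)}_{m}=c^{(s)}_{m-1}-c^{(s)}_{n-1}a_{m}$ and $h_{n,n}\mid c^{(s)}_{n-1}$ one peels off one coefficient at a time), and in your inductive step the integrality of the correction coefficients is itself part of the inner induction on $s$, a point worth stating explicitly. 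What the comparison buys: the paper's proof is shorter, needs only one shift per level, and never touches $f$ beyond closure under multiplication by $x$ in degree $<n$, which fits the paper's theme that the ambient ring is far from unique; your proof front-loads exactly the divisibility $h_{n,n}\mid h_{i,j}$ that is later used (in Theorem~\ref{kthm} and Lemma~\ref{identify}) as a self-contained base case, and the operator $\widetilde{x}$ gives a conceptually clean picture of the step, at the price of a double induction and of re-proving $h_{i,i}\mid h_{j,l}$ for every $j\le i$ at each level, since downward induction cannot exploit transitivity through $h_{k,k}$ the way the paper's upward induction does.
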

\begin{proof} 
By induction on $i$,  it's trivial for $i=1$.
         
         Assume the result holds for $i \leq k \leq n-1$. It remains to show that for $i=k+1$, $h_{k+1,k+1}|h_{j,l}$ where ${1\leq l \leq j \leq k+1 \leq n}$.
          
          Let $\mathbf{h}_i$ be the $i$-th row of $\mathbf{H}$. Note that for any ideal $I$ of $\mathbb{Z}[x]/f(x)$ and for all $g(x)\in I $, $xg(x) \in I$. Specially $x\sigma^{-1}(\mathbf{h}_k) \in I$, where $\sigma$ is the coefficient-embedding. Since  $\mathbf{H}$ is a basis of the ideal lattice, it is very simple to imply that there must exist $y_i \in \mathbb{Z}$, for $i=1,2,\cdots,k+1$ such that:\[ \begin{pmatrix} 0&h_{k,1}&\cdots&h_{k,k}&0&\cdots&0 \end{pmatrix}=\sum_{i=1}^{k+1}y_i\mathbf{h}_i.\]
          
         Hence, \begin{align*}h_{k,k}&=y_{k+1}h_{k+1,k+1} \\ h_{k,k-1}&=y_kh_{k,k}+y_{k+1}h_{k+1,k}\\\vdots\\h_{k,1}&=\sum_{i=2}^{k+1}y_ih_{i,2}\\0&=\sum_{i=1}^{k+1}y_ih_{i,1} \end{align*}

         From the first equation, we get $y_{k+1}=\frac{h_{k,k}}{h_{k+1,k+1}}\in \mathbb{Z}$, and \begin{align*} h_{k+1,k}&=\frac{h_{k,k-1}-y_kh_{k,k}}{h_{k,k}}h_{k+1,k+1}\\h_{k+1,k-1}&=\frac{h_{k,k-2}-y_{k-1}h_{k-1,k-1}-y_kh_{k,k-1}}{h_{k,k}}h_{k+1,k+1}\\\vdots\\h_{k+1,2}&=\frac{h_{k,1}-\sum_{i=2}^ky_ih_{i,2}}{h_{k,k}}h_{k+1,k+1}\\h_{k+1,1}&=\frac{-\sum_{i=1}^ky_ih_{i,1}}{h_{k,k}}h_{k+1,k+1} \end{align*}
          
          From the induction hypothesis, we have $h_{k,k}|h_{j,l}$ for $1\leq l \leq j \leq k \leq n$. So the coefficient of $h_{k+1,k+1}$ in each equation is in fact an integer. Therefore, $h_{k+1,k+1}|h_{k+1,l},1 \leq l \leq k+1$. Since $h_{k+1,k+1}|h_{k,k}$, we know $h_{k+1,k+1}|h_{j,l}$, where $1 \leq l \leq j \leq k+1 \leq n$. Thus, the result holds for $i=k+1$. 
          
          By induction,  $h_{i,i}|h_{j,l}$, ${1\leq l \leq j \leq i \leq n}$. So $h_{n,n}|h_{i,j}$, ${1\leq i \leq j \leq n}$.  Lemma \ref{klemma} follows.
\end{proof}

%In some cases, the $\gamma$-SVP problem in canonical-embedding ideal lattices  and  coefficient-embedding ideal lattices  are equivalent or very closely connected with each other as we've mentioned in the introduction.

\subsection{Overview}In the third section, we first show and prove a naturally equivalent definition (Lemma \ref{klemma4ideal}) of integer lattices. It's a direct application of the definition of the coefficient-embedding ideal lattice. Though the result of Lemma \ref{klemma4ideal} may have been used in some earlier research, we haven't found a detailed description. Hence, we rewrite and prove Lemma \ref{klemma4ideal} formally.

Inspired by Lemma \ref{klemma} proposed by Zhang, Liu and Lin \cite{ref16}, we propose Theorem \ref{thmidentify}, another equivalent definition of ideal lattices in Section 3.2. Using this equivalent definition, we design Algorithm \ref{alg:identify} to verify whether an integer lattice is an ideal lattice.

In Section 3.3, Theorem \ref{kthm} shows that a coefficient-embedding ideal lattice can be embedded into another polynomial ring denoted by $R$ as an ideal of $R$, and for a fixed coefficient-embedding ideal lattice the number of such $R$ is infinite. The proof is also motivated by Lemma \ref{klemma}. Theorem \ref{kthm} guarantees that Algorithm \ref{alg:identify} can output all the polynomial rings which the input integer lattice can be embedded into as ideals.

%Based on theorem \ref{kthm}, we present the theorem \ref{thmidentify}, an equivalent condition for identifying an ideal lattice. The key step in the proof of theorem \ref{kthm} relies on the lemma \ref{klemma}, and by observing the proof detail of lemma \ref{klemma} we give the lemma \ref{identify}, a weaker version of lemma \ref{klemma}. It's easy to proof theorem \ref{thmidentify} by using the lemma \ref{identify}.

%Next, we introduce some applications of Theorem 1, and give some examples. Though the examples may not be practical in the reality for now, it do supply us a new method to deal with the $\gamma$-SVP in ideal lattices and integer lattices.

In the fourth section, we propose Algorithm \ref{alg:identify} to judge whether an integer lattice can be embedded into a polynomial ring as ideals and compute all the rings that the lattice can be embedded into as an ideal if the given lattice is a coefficient-embedding ideal lattice. We analysis the time complexity of Algorithm \ref{alg:identify} and also compare our algorithm to related work.

Finally, we give a brief conclusion. Out experimental data is presented in the Appendix~\ref{AppendixA}.

%We originally want to make use of special coefficients relation of the HNF of the coefficient-embedding ideal lattice given by Lemma 5, but we find the result of Lemma 5 is too weak to judge whether an integer lattice is an ideal lattice. Hence, we introduce the Incomplete Hermite Normal Form (Definition 3) and propose a stronger condition (Theorem 2 ) than the result of Lemma 5 to judge an ideal lattice. Based on Theorem 2, we propose Algorithm 1,which can judge whether an integer lattice is an ideal lattice and compute all the rings the input lattice can be embedded into as ideals if the input lattice is an ideal lattice, and give an analysis of the complexity of Algorithm 1.

\section{An ideal lattice can be embedded into different rings}
We stress that in the following, we focus on the coefficient-embedding ideal lattice, and in this section, we'll show how an coefficient-embedding ideal lattice can be embedded into different rings. 

%It is well known that a lattice is just an additive group. However, when it is also equipped  with some "multiplication", then it becomes an ideal lattice. A natural question is that what will happen if we equip the same lattice with different "multiplication"? Obviously if this can be done, the lattice will not change, but the ideal changes, which means that an ideal lattice can be viewed as different ideals in different rings.

\subsection{Deciding an ideal lattice} We next present an easy way to tell if a given lattice is a  coefficient-embedding ideal lattice in $\mathbb{Z}[x]/f(x)$ or not.

\begin{lemma} \label{klemma4ideal}
	For any monic polynomial $f(x)\in\mathbb{Z}[x]$ with degree $n$, a lattice $\mathcal{L}(\mathbf{B})$ with any basis $\mathbf{B}$ is a  coefficient-embedding ideal lattice in $\mathbb{Z}[x]/f(x)$  if and only if $\sigma(x\sigma^{-1}(\mathbf{b}_i)\mod f(x))\in \mathcal{L}(\mathbf{B})$ for $i=1,\cdots,n$, where $\mathbf{b}_i$ is the $i$-th row vector of $\mathbf{B}$, and $\sigma$ is the map defined in Section 2.3.
\end{lemma}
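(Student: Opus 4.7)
The plan is to prove both directions by unwinding the definition of an ideal of $\mathbb{Z}[x]/f(x)$ and exploiting the fact that $\sigma$ is a $\mathbb{Z}$-linear bijection between $\mathbb{Z}^{(n)}[x]$ and $\mathbb{Z}^n$. Since $f(x)$ is monic of degree $n$, the ring $\mathbb{Z}[x]/f(x)$ is a free $\mathbb{Z}$-module of rank $n$ with basis $\{1, x, \ldots, x^{n-1}\}$, and every residue class has a unique representative in $\mathbb{Z}^{(n)}[x]$; this is the structural fact that makes $\sigma$ and $\sigma^{-1}$ well-defined and additive.

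For the forward direction I would assume $\mathcal{L}(\mathbf{B}) = \sigma(I)$ for some ideal $I \subseteq \mathbb{Z}[x]/f(x)$. Each $\sigma^{-1}(\mathbf{b}_i)$ then represents an element of $I$, so $x \cdot \sigma^{-1}(\mathbf{b}_i) \bmod f(x)$ is again in $I$ (ideals absorb multiplication by ring elements), and applying $\sigma$ puts the image back in $\mathcal{L}(\mathbf{B})$. This direction is essentially a one-line check.

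For the reverse direction, assume $\sigma(x\,\sigma^{-1}(\mathbf{b}_i) \bmod f(x)) \in \mathcal{L}(\mathbf{B})$ for every $i$, and set $I := \sigma^{-1}(\mathcal{L}(\mathbf{B})) \subseteq \mathbb{Z}[x]/f(x)$. By $\mathbb{Z}$-linearity of $\sigma^{-1}$ and the fact that $\mathcal{L}(\mathbf{B})$ is a $\mathbb{Z}$-module, $I$ is an additive subgroup. The heart of the argument is to show $I$ is closed under multiplication by every element of $\mathbb{Z}[x]/f(x)$. It suffices to show closure under multiplication by $x$, because any $h(x) \in \mathbb{Z}[x]/f(x)$ has a representative $\sum_{k=0}^{n-1} c_k x^k$ and multiplication is $\mathbb{Z}$-bilinear, so closure under each $x^k$ combined with additive closure yields closure under multiplication by $h(x)$. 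Closure under multiplication by $x^k$ follows by induction from closure under $x$.

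So the main step reduces to: for every $g(x) \in I$, $x g(x) \bmod f(x) \in I$. Write $g(x) = \sum_i c_i \sigma^{-1}(\mathbf{b}_i)$ with $c_i \in \mathbb{Z}$, so that $\sigma(g) = \sum_i c_i \mathbf{b}_i$. Then
\[
x g(x) \bmod f(x) \;=\; \sum_{i=1}^n c_i \bigl(x\,\sigma^{-1}(\mathbf{b}_i) \bmod f(x)\bigr),
\]
and applying $\sigma$ gives $\sum_i c_i \, \sigma(x\,\sigma^{-1}(\mathbf{b}_i) \bmod f(x))$, which lies in $\mathcal{L}(\mathbf{B})$ by the hypothesis and the fact that $\mathcal{L}(\mathbf{B})$ is closed under $\mathbb{Z}$-linear combinations. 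Hence $xg(x) \bmod f(x) \in I$, and $I$ is an ideal. The only subtlety to watch is that the identity $x(g_1+g_2) \bmod f = (xg_1 \bmod f)+(xg_2 \bmod f)$ requires monicity of $f$ so that reduction mod $f$ is a well-defined $\mathbb{Z}$-module map on $\mathbb{Z}[x]$; this is where the monic hypothesis actually gets used. No step presents a serious obstacle — the whole lemma is a clean translation between the ring-theoretic and lattice-theoretic descriptions via $\sigma$.
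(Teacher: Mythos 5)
Your proposal is correct and follows essentially the same route as the paper: the ``only if'' direction is the same one-line absorption argument, and for the ``if'' direction both you and the paper show closure of $\sigma^{-1}(\mathcal{L}(\mathbf{B}))$ under multiplication by $x$ via $\mathbb{Z}$-linear combinations of the basis rows, iterate to get $x^k$, and then combine linearly for an arbitrary ring element. Your explicit remark that monicity of $f$ makes reduction modulo $f$ a well-defined $\mathbb{Z}$-linear map is a nice clarification that the paper leaves implicit, but it does not change the argument.
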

\begin{proof} 
	If  $\mathcal{L}(\mathbf{B})$ is a  coefficient-embedding ideal lattice in $\mathbb{Z}[x]/f(x)$, then $\sigma^{-1}(\mathbf{b}_i)$'s are in the corresponding ideal. It is obvious that  $x\sigma^{-1}(\mathbf{b}_i)\mod f(x)$ must be in the ideal too, which means that $\sigma(x\sigma^{-1}(\mathbf{b}_i)\mod f(x))\in \mathcal{L}(\mathbf{B})$.

	If there exists a  monic polynomial $f(x)\in\mathbb{Z}[x]$ with degree $n$, such that $\sigma(x\sigma^{-1}(\mathbf{b}_i)\mod f(x))\in \mathcal{L}(\mathbf{B})$ for $i=1,\cdots,n$, we show that $\sigma^{-1}(\mathcal{L}(\mathbf{B}))$ must be an ideal in $\mathbb{Z}[x]/f(x)$. It is easy to check that   $\sigma^{-1}(\mathcal{L}(\mathbf{B}))$  is an additive group, due to the fact that $\sigma$ is an additive homomorphism. Since  $\sigma(x\sigma^{-1}(\mathbf{b}_i)\mod f(x))\in \mathcal{L}(\mathbf{B})$, then for any lattice vector $\mathbf{v} = \sum_{i=1}^n z_i\mathbf{b}_i$, $z_i\in\mathbb{Z}$, we have   $$\sigma(x\sigma^{-1}(\mathbf{v})\mod f(x))=\sum_{i=1}^n z_i \sigma( x \sigma^{-1}(\mathbf{b}_i)\mod f(x)) \in \mathcal{L}(\mathbf{B}). $$
	Applying the result on the lattice vector $\sigma(x\sigma^{-1}(\mathbf{v})\mod f(x))$, we will have
	$$\sigma(x^2\sigma^{-1}(\mathbf{v}))= \sigma(x \sigma^{-1}(\sigma(x\mathbf{v}\mod f(x)))) \in \mathcal{L}(\mathbf{B}). $$
	Hence, for any positive integer $k$, we know that 
		$$\sigma(x^k\sigma^{-1}(\mathbf{v})) \in \mathcal{L}(\mathbf{B}). $$
	Then for any $g(x)=\sum_{i=1}^n g_ix^{i-1}\in\mathbb{Z}[x]/f(x)$ and any lattice vector $\mathbf{v}$, 
		$$\sigma(g(x)\sigma^{-1}(\mathbf{v})\mod f(x))=\sum_{i=1}^n g_i \sigma( x^{i-1} \sigma^{-1}(\mathbf{v})\mod f(x)) \in \mathcal{L}(\mathbf{B}). $$
	The lemma follows.

\end{proof}

\subsection{Equivalent condition}

%\section{Identifying an Ideal Lattice}

%We also present an algorithm to identify the ideal lattice, which is faster than that in \cite{ref13}.

%\subsection{Main theorem}
%Inspired by Theorem \ref{kthm}, we find a new equivalent condition between integer lattices and coefficient-embedding ideal lattices, which is described as below.
Inspired by Lemma \ref{klemma}, we find a new equivalent definition of coefficient-embedding ideal lattices.    
  
\begin{theorem}\label{thmidentify}
	Given a full-rank integer lattice $\mathcal{L}(\mathbf{B})$, let $\mathbf{B}'= \begin{pmatrix} \mathbf{D}&\mathbf{0} \\ \mathbf{b'}&b_{n,n}' \end{pmatrix}$ be any  Incomplete Hermit Normal Form of \ $\mathbf{B}$. Then $\mathcal{L}(\mathbf{B})$ is an ideal lattice if and only if there exists a $\mathbf{T} \in \mathbb{Z}^{(n-1) \times n}$, s.t.$\begin{pmatrix} \mathbf{0}&\mathbf{D}\end{pmatrix}=\mathbf{T}\mathbf{B}$. Specially, if $\mathcal{L}(\mathbf{B})$ is an ideal lattice, then taking any $g(x)=x^n+g_nx^{n-1}+\cdots+g_1$ with $\begin{pmatrix}g_1&g_2&\cdots&g_n\end{pmatrix} \in\frac{1}{b_{n,n}'}(\begin{pmatrix}0&\mathbf{b}'\end{pmatrix} +\mathcal{L}(\mathbf{B}))$, $\mathcal{L}(\mathbf{B})$ is also an ideal lattice in the ring $\mathbb{Z}[X]/g(x)$.
\end{theorem}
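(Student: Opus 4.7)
The plan is to reduce both implications to Lemma~\ref{klemma4ideal}, applied to $\mathbf{B}'$: since $\mathbf{B}'$ is obtained from $\mathbf{B}$ by a unimodular matrix it is also a basis of $\mathcal{L}(\mathbf{B})$, and Lemma~\ref{klemma4ideal} says $\mathcal{L}(\mathbf{B})$ is an ideal lattice in $\mathbb{Z}[x]/f(x)$ iff $\sigma(x\sigma^{-1}(\mathbf{r})\bmod f(x))\in\mathcal{L}(\mathbf{B})$ for every row $\mathbf{r}$ of $\mathbf{B}'$. The first $n-1$ rows of $\mathbf{B}'$ end in $0$, so the corresponding polynomials have degree $<n-1$; multiplying them by $x$ keeps the degree $<n$ and requires no reduction modulo $f(x)$, and the image under $\sigma$ is simply the right-shift $(0,\mathbf{d}_i)$. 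Stacking these $n-1$ shifts as rows produces exactly the matrix $(\mathbf{0},\mathbf{D})$. Hence the Lemma~\ref{klemma4ideal} condition on rows $1,\ldots,n-1$ of $\mathbf{B}'$ is equivalent to $(\mathbf{0},\mathbf{D})=\mathbf{T}\mathbf{B}$ for some $\mathbf{T}\in\mathbb{Z}^{(n-1)\times n}$. This already settles the forward direction: if $\mathcal{L}(\mathbf{B})$ is an ideal lattice, such a $\mathbf{T}$ must exist.

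For the reverse direction and the ``specially'' clause, I would next compute the shift of the last row $(\mathbf{b}',b_{n,n}')$: here multiplying by $x$ produces a $b_{n,n}'x^n$ term, and reducing modulo $f(x)=x^n+g_nx^{n-1}+\cdots+g_1$ followed by $\sigma$ gives
$$\sigma\!\left(x\sigma^{-1}\bigl((\mathbf{b}',b_{n,n}')\bigr)\bmod f(x)\right)=(0,\mathbf{b}')-b_{n,n}'(g_1,\ldots,g_n).$$
Requiring this to lie in $\mathcal{L}(\mathbf{B})$ is equivalent to $(g_1,\ldots,g_n)\in\frac{1}{b_{n,n}'}\bigl((0,\mathbf{b}')+\mathcal{L}(\mathbf{B})\bigr)$, matching exactly the coset in the theorem. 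Consequently, any integer vector in that coset produces a monic $f(x)\in\mathbb{Z}[x]$ that satisfies Lemma~\ref{klemma4ideal}'s criterion on all $n$ rows of $\mathbf{B}'$, yielding the reverse implication and the ``specially'' clause simultaneously.

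The main obstacle is to show that the coset actually contains an integer vector, i.e.\ that $b_{n,n}'$ divides every entry of $(0,\mathbf{b}')$. I plan to handle this by passing to the HNF basis $\mathbf{H}$ of $\mathcal{L}(\mathbf{B})$: Lemma~\ref{ihnflemma} gives $h_{n,n}=b_{n,n}'$, and since the hypothesis $(\mathbf{0},\mathbf{D})=\mathbf{T}\mathbf{B}$ is really the basis-independent statement $\{0\}\times \mathcal{L}(\mathbf{D}) \subseteq \mathcal{L}(\mathbf{B})$, the analogous shifts of the top $n-1$ rows of $\mathbf{H}$ also lie in $\mathcal{L}(\mathbf{B})$. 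The inductive computation in the proof of Lemma~\ref{klemma} uses only these ``unreduced'' shifts (the wrap-around of the last row is never invoked), so it carries over verbatim and yields $h_{n,n}\mid h_{j,l}$ for all $1\le l\le j\le n$. Therefore $b_{n,n}'$ divides every entry of every vector of $\mathcal{L}(\mathbf{B})$, in particular of $(0,\mathbf{b}')$; taking $(g_1,\ldots,g_n)=(0,\mathbf{b}')/b_{n,n}'\in\mathbb{Z}^n$ exhibits an explicit integer representative of the coset and produces a valid $f(x)$, completing the argument.
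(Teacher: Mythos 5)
Your proof is correct and takes essentially the same route as the paper: both reduce everything to Lemma~\ref{klemma4ideal}, identify the condition on the first $n-1$ rows of $\mathbf{B}'$ (unreduced shifts) with the existence of $\mathbf{T}$, characterize admissible $g(x)$ through the coset $\frac{1}{b'_{n,n}}\bigl(\begin{pmatrix}0&\mathbf{b}'\end{pmatrix}+\mathcal{L}(\mathbf{B})\bigr)$ via the reduction of the last row, and obtain integrality from the divisibility argument underlying Lemma~\ref{klemma}. Your explicit passage to the HNF basis, using $h_{n,n}=b'_{n,n}$ from Lemma~\ref{ihnflemma} and observing that the induction never needs the wrap-around of the last row, merely fills in the step the paper dismisses with ``the proof is exactly the same with Lemma~\ref{klemma}.''
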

\begin{proof}
It can be easily check the ``only if'' part by Lemma \ref{klemma4ideal}, since for an ideal lattice $\mathcal{L}(\mathbf{B})$ in $\mathbb{Z}[x]/g(x)$, there exists a $\mathbf{T} \in \mathbb{Z}^{(n-1) \times n}$, s.t. $\begin{pmatrix} \mathbf{0}&\mathbf{D}\end{pmatrix}=\mathbf{T}\mathbf{B}$ if and only if  $\sigma(x\sigma^{-1}(\mathbf{b}'_i)\mod g(x))\in \mathcal{L}(\mathbf{B})$ for $i=1,\cdots,n-1$.

For ``if'' part,
to indicate that $\mathcal{L}(\mathbf{B})$ is an ideal lattice, we need to find a monic polynomial $g(x)$ of degree $n$, s.t. $\mathcal{L}(\mathbf{B})$ can be embedded as an ideal into $\mathbb{Z}[x]/g(x)$, or $\sigma(x\sigma^{-1}(\mathbf{b}'_i)\mod g(x))\in \mathcal{L}(\mathbf{B})$ for $i=1,\cdots,n$  by Lemma \ref{klemma4ideal}.

Note that for any polynomial $g(x)$ with degree $n$, $\sigma(x\sigma^{-1}(\mathbf{b}'_i)\mod g(x))\in \mathcal{L}(\mathbf{B})$ for $i=1,\cdots,n-1$ since there exists a $\mathbf{T} \in \mathbb{Z}^{(n-1) \times n}$, s.t. $\begin{pmatrix} \mathbf{0}&\mathbf{D}\end{pmatrix}=\mathbf{T}\mathbf{B}$. 

It remains to show that there exists a monic polynomial $g(x)$ of degree $n$, such that $\sigma(x\sigma^{-1}(\mathbf{b}'_n)\mod g(x))\in \mathcal{L}(\mathbf{B})$.

We first present a lemma, which will be proven later.
\begin{lemma} \label{identify}	
	If $\begin{pmatrix} \mathbf{0}&\mathbf{D}\end{pmatrix}=\mathbf{T}\mathbf{B}$, then $ \mathbf{B}'/b_{n,n}' \in \mathbb{Z}^{n \times n}$
\end{lemma}

 By Lemma \ref{identify},  $\frac{1}{b_{n,n}'}(\begin{pmatrix}0&\mathbf{b}'\end{pmatrix} +\mathcal{L}(\mathbf{B})) \subset \mathbb{Z}^n$. Taking any
 \begin{equation} \label{equg}
  \mathbf{g}=\begin{pmatrix}g_1&g_2&\cdots&g_n\end{pmatrix} \in\frac{1}{b_{n,n}'}(\begin{pmatrix}0&\mathbf{b}'\end{pmatrix} +\mathcal{L}(\mathbf{B})),
 \end{equation}
   the integer polynomial $g(x)=x^n+g_nx^{n-1}+\cdots+g_1$ is what we want, since
$$
 \sigma(x\sigma^{-1}(\mathbf{b}'_n)\mod g(x)) = \begin{pmatrix}0&\mathbf{b}'\end{pmatrix} - {b_{n,n}'}\begin{pmatrix}g_1&g_2&\cdots&g_n\end{pmatrix}\in \mathcal{L}(\mathbf{B}). $$

%Hence, all the polynomials corresponding to the vectors in the coset $\frac{1}{b_{n,n}'}(\begin{pmatrix}0&\mathbf{b}'\end{pmatrix} +\mathcal{L}(\mathbf{B}))$ can induce a coefficient embedding from $\mathcal{L}(\mathbf{B})$ into polynomial ring.  $\mathcal{L}(\mathbf{B})$ is a coefficient-embedding ideal lattice.
%\end{proof}

It remains to prove Lemma \ref{identify}. Actually, the proof is exactly the same with Lemma \ref{klemma}
%\begin{proof}{(Lemma  \ref{identify})} According to Lemma \ref{hnfbasis}, 	
 %$\mathcal{L}(\mathbf{B}')$ has a unique HNF basis, denoted by $\mathbf{H}=(h_{i,j})_{1\leq i \leq %n,1\leq j \leq n}$. By Lemma \ref{ihnflemma}, we know that $b_{n,n}'=h_{n,n}$.
 
 %It can be easily concluded that the lattice $\mathcal{L}(\mathbf{D})$ has a unique HNF basis,  $\mathbf{H}'=(h_{i,j})_{1\leq i \leq n-1,1\leq j \leq n-1}$, which implies that there exists a unimodular matrix $\mathbf{U}\in \mathbb{Z}^{(n-1)\times (n-1)}$ such that $\mathbf{H}' = \mathbf{U} \mathbf{D}$.
	
%Since $\begin{pmatrix} \mathbf{0}&\mathbf{D}\end{pmatrix}=\mathbf{T}\mathbf{B}$, we have 
%$\mathbf{U}\begin{pmatrix} \mathbf{0}&\mathbf{D}\end{pmatrix}=\mathbf{U}\mathbf{T}\mathbf{B}$, which is exactly
%\begin{equation}\label{equ3}
%\begin{pmatrix}0&h_{1,1}&0&...&0\\0&h_{2,1}&h_{2,2}&...&0\\.&.&.&.&.\\0&h_{n-1,1}&h_{n-1,2}&...&h_{n-1,n-1}\end{pmatrix}=\mathbf{U}\mathbf{T}\mathbf{B}.
%\end{equation}

%Note that $\mathcal{L}(\mathbf{U}\mathbf{T}\mathbf{B})\subset \mathcal{L}(\mathbf{B})$. What Equation (\ref{equ3}) tells us is 
%$$\sigma(x\sigma^{-1}(\mathbf{h}_k)) \in \mathcal{L}(\mathbf{B}),\mbox{ for } k =1,\cdots, n-1.$$

%By the discussion in Remark \ref{remarkhnf}, we have  $b_{n,n}'=h_{n,n}\vert\mathbf{B}$. 
\end{proof}

\subsection{An ideal lattice can be embedded into infinitely many different polynomial rings as ideals}
%The main difference between the lattice and the ideal is the multiplication structure.  To simulate the ideal multiplication structure in an integer lattice, we need to define a column transformation  on a row vector.
Given a full-rank ideal lattice $\mathcal{L}(\mathbf{B})$ together with the Incomplete Hermit Normal Form $\mathbf{B}'= \begin{pmatrix} \mathbf{D}&\mathbf{0} \\ \mathbf{b'}&b_{n,n}' \end{pmatrix}$, Theorem \ref{thmidentify} shows that for any $g(x)=x^n+g_nx^{n-1}+\cdots+g_1$ with $\begin{pmatrix}g_1&g_2&\cdots&g_n\end{pmatrix} \in\frac{1}{b_{n,n}'}(\begin{pmatrix}0&\mathbf{b}'\end{pmatrix} +\mathcal{L}(\mathbf{B}))$,  $\mathcal{L}(\mathbf{B})$ is also an ideal lattice in the ring $\mathbb{Z}[X]/g(x)$. The following theorem proves that only if we take $g(x)$ in this way, $\mathcal{L}(\mathbf{B})$ can be viewed as an ideal lattice in the ring $\mathbb{Z}[X]/g(x)$. In other words, the coset $\frac{1}{b_{n,n}'}(\begin{pmatrix}0&\mathbf{b}'\end{pmatrix} +\mathcal{L}(\mathbf{B}))$ can represent the class of all the polynomial rings which the given ideal lattice $\mathcal{L}(\mathbf{B})$ can be embedded into as ideals.

\begin{theorem}\label{kthm}
For any  full-rank coefficient-embedding ideal lattice $\mathcal{L}(\mathbf{B})$ in the ring $\mathbb{Z}[x]/f(x)$, where $f(x)$ is monic and $\text{deg}(f(x))=n$, there exists infinitely many monic $g(x)\in\mathbb{Z}[x]$ with degree $n$, s.t.  $\mathcal{L}(\mathbf{B})$ is also a coefficient-embedding  ideal lattice in $\mathbb{Z}[x]/g(x)$.

More precisely, 
let $d = \gcd( b_{1,n}, b_{2,n},...,b_{n,n})$. Then $\mathcal{L}(\mathbf{B})$ is also a coefficient-embedding  ideal lattice in $\mathbb{Z}[x]/g(x)$, where $g(x)\in\mathbb{Z}[x]$ is a monic polynomial with degree $n$, if and only if 
$$\sigma(f(x)-g(x)) \in  \mathcal{L}(\frac{\mathbf{B}}{d}),$$
or equivalently,
$$g(x) \in f(x) + \sigma^{-1}(\mathcal{L}(\frac{\mathbf{B}}{d})).$$

\end{theorem}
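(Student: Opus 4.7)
The idea is to translate the ideal-lattice condition in $\mathbb{Z}[x]/g(x)$ into an explicit linear condition on the difference $f(x)-g(x)$. For any monic polynomial $h$ of degree $n$ and any row $\mathbf{b}_i=(b_{i,1},\ldots,b_{i,n})$ of $\mathbf{B}$, the reduction of $x\sigma^{-1}(\mathbf{b}_i)$ modulo $h$ is
\[
 x\sigma^{-1}(\mathbf{b}_i) \bmod h(x) \;=\; x\sigma^{-1}(\mathbf{b}_i) - b_{i,n}\,h(x),
\]
because the coefficient of $x^n$ in $x\sigma^{-1}(\mathbf{b}_i)=\sum_{j=1}^{n} b_{i,j}x^{j}$ is exactly $b_{i,n}$ and $h$ is monic. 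Subtracting this identity for $h=g$ and $h=f$ yields the key relation
\[
 \bigl(x\sigma^{-1}(\mathbf{b}_i) \bmod g(x)\bigr) - \bigl(x\sigma^{-1}(\mathbf{b}_i) \bmod f(x)\bigr) \;=\; b_{i,n}\bigl(f(x)-g(x)\bigr).
\]

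Next, since $\mathcal{L}(\mathbf{B})$ is already an ideal lattice in $\mathbb{Z}[x]/f(x)$, Lemma~\ref{klemma4ideal} gives $\sigma\bigl(x\sigma^{-1}(\mathbf{b}_i)\bmod f(x)\bigr)\in\mathcal{L}(\mathbf{B})$ for every $i$. Applying Lemma~\ref{klemma4ideal} to $g$ and feeding in the relation above, $\mathcal{L}(\mathbf{B})$ is an ideal lattice in $\mathbb{Z}[x]/g(x)$ if and only if $b_{i,n}\,\sigma(f-g)\in\mathcal{L}(\mathbf{B})$ for $i=1,\ldots,n$. I would then invoke the fact that $\mathcal{L}(\mathbf{B})$ is a $\mathbb{Z}$-module and use any Bezout identity $\sum_i c_i b_{i,n}=d$ to collapse these $n$ conditions into the single condition $d\,\sigma(f-g)\in\mathcal{L}(\mathbf{B})$; the converse direction is automatic because $d\mid b_{i,n}$ for each $i$. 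Rewriting this as $\sigma(f-g)\in \tfrac{1}{d}\mathcal{L}(\mathbf{B})=\mathcal{L}(\mathbf{B}/d)$ delivers the claimed equivalence.

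For the existence of infinitely many such $g$, I would note that $\mathcal{L}(\mathbf{B}/d)$ is a full-rank lattice in $\mathbb{R}^n$, so the coset $f(x)+\sigma^{-1}(\mathcal{L}(\mathbf{B}/d))$ is infinite. The main delicate point—what I expect to be the real obstacle—is confirming that every representative in this coset is genuinely a monic degree-$n$ integer polynomial. Monicity and the degree bound are free, since $\sigma^{-1}(\mathcal{L}(\mathbf{B}/d))$ lives in $\mathbb{Z}^{(n)}[x]$ and therefore only perturbs coefficients of degree $<n$, leaving the leading term of $f$ untouched. Integrality requires $\mathcal{L}(\mathbf{B}/d)\subseteq\mathbb{Z}^n$. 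To secure this I would pass to any Incomplete HNF $\mathbf{B}'$ of $\mathbf{B}$: by Lemma~\ref{ihnflemma} its last diagonal entry equals $d=\gcd(b_{1,n},\ldots,b_{n,n})$, and because $\mathcal{L}(\mathbf{B})$ is an ideal lattice, Lemma~\ref{identify} (applied via Theorem~\ref{thmidentify}) yields $\mathbf{B}'/d\in\mathbb{Z}^{n\times n}$. Since $\mathcal{L}(\mathbf{B}/d)=\mathcal{L}(\mathbf{B}'/d)$ is generated over $\mathbb{Z}$ by integer rows, it is contained in $\mathbb{Z}^n$, closing the argument.
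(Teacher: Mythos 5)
Your proof is correct and follows essentially the same route as the paper's: both reduce the question to Lemma~\ref{klemma4ideal}, compare the reductions of $x\sigma^{-1}(\mathbf{b}_i)$ modulo $f$ and modulo $g$ to isolate an integer multiple of $\sigma(f(x)-g(x))$, and use the divisibility-by-$d$ structure of an ideal lattice (Lemma~\ref{klemma}, equivalently Lemma~\ref{identify} with Lemma~\ref{ihnflemma}) to see that $\mathcal{L}(\frac{\mathbf{B}}{d})\subseteq\mathbb{Z}^n$, so the coset of candidate $g$'s is infinite and consists of monic integer polynomials. The only real difference is one of packaging: the paper specializes to the HNF basis, where the factor $d$ appears directly as $h_{n,n}$, whereas you keep an arbitrary basis and recover $d$ from the conditions $b_{i,n}\,\sigma(f-g)\in\mathcal{L}(\mathbf{B})$ by a Bezout combination, which is an equally valid and slightly more symmetric way to handle both directions of the equivalence.
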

\begin{proof}
%It's sufficient to find an injection from $\mathcal{L}(\mathbf{B})$ to some polynomial ring $\mathbb{Z}[x]/g(x)$. 

Consider the HNF basis of $\mathcal{L}(\mathbf{B})$, 
 \[ \mathbf{H}= \begin{pmatrix} h_{1,1}&0&\cdots&0 \\ h_{2,1}&h_{2,2}&\cdots&0 \\ \vdots&\vdots&\ddots&\vdots \\h_{n,1}&\cdots&\cdots&h_{n,n} \end{pmatrix}.\]
For convenience, we denote  by $\mathbf{h}_i$ the $i$-th row of \ $\mathbf{H}$, and then $\mathbf{h}_i$ is a vector in $\mathbb{Z}^n$. 

(i) If there is a monic $g(x)\in\mathbb{Z}[x]$ with degree $n$, s.t.  $\mathcal{L}(\mathbf{B})$ is also a coefficient-embedding  ideal lattice in $\mathbb{Z}[x]/g(x)$, we next prove that $\sigma(f(x)-g(x)) \in  \mathcal{L}(\frac{\mathbf{B}}{d})$.  

By Lemma \ref{klemma4ideal}, 
 $\mathcal{L}(\mathbf{H}) = \mathcal{L}(\mathbf{B})$ is  a coefficient-embedding  ideal lattice in $\mathbb{Z}[x]/f(x)$, then we  have
$$\sigma(x\sigma^{-1}(\mathbf{h}_n)\mod f(x))\in \mathcal{L}(\mathbf{B}).$$
Note that 
$$x\sigma^{-1}(\mathbf{h}_n)\mod f(x) = \sum_{i=1}^{n-1} h_{n,i}x^i - h_{n,n} (f(x)-x^n).$$
We have
 \begin{equation} \label{equ1}
\begin{pmatrix} 0&h_{n,1}&...&h_{n,n-1} \end{pmatrix}-h_{n,n}\sigma(f(x)-x^n) \in  \mathcal{L}(\mathbf{B}).
 \end{equation}

 Similarly, since $\mathcal{L}(\mathbf{B})$ is also a coefficient-embedding  ideal lattice in $\mathbb{Z}[x]/g(x)$, we have 
  \begin{equation} \label{equ2}
 \begin{pmatrix} 0&h_{n,1}&...&h_{n,n-1} \end{pmatrix}-h_{n,n}\sigma(g(x)-x^n) \in  \mathcal{L}(\mathbf{B}).
  \end{equation}
 Subtracting the left side of (\ref{equ1}) from the left side of (\ref{equ2}), we immediately have
 $$h_{n,n}\sigma(f(x)-g(x)) \in \mathcal{L}(\mathbf{B}).$$
 By Lemma \ref{ihnflemma}, $h_{n,n} = d$, we have
 $$\sigma(f(x)-g(x)) \in \mathcal{L}(\frac{\mathbf{B}}{d}).$$

 (ii) We next prove that for any  polynomial $g(x)$, such that $\sigma(f(x)-g(x)) \in  \mathcal{L}(\frac{\mathbf{B}}{d})$, any  full-rank coefficient-embedding ideal lattice $\mathcal{L}(\mathbf{B})$ in the ring $\mathbb{Z}[x]/f(x)$ can also be viewed as a coefficient-embedding  ideal lattice in $\mathbb{Z}[x]/g(x)$.

 First, $g(x)$ is obviously a monic polynomial with degree $n$. Note that by Lemma \ref{klemma}, $h_{n,n}|h_{i,j}$, then $d = h_{n,n}$ divide all the components of every lattice vector in $\mathcal{L}(\mathbf{B})$, which means that $ \mathcal{L}(\frac{\mathbf{B}}{d})$ is an integer lattice and once $\sigma(f(x)-g(x)) \in  \mathcal{L}(\frac{\mathbf{B}}{d})$, $g(x)\in\mathbb{Z}[x]$.

 By Lemma \ref{klemma4ideal} again, 
 $\mathcal{L}(\mathbf{H}) = \mathcal{L}(\mathbf{B})$ is  a coefficient-embedding  ideal lattice in $\mathbb{Z}[x]/f(x)$, then we  have
 $$\sigma(x\sigma^{-1}(\mathbf{h}_i)\mod f(x))\in \mathcal{L}(\mathbf{B}),$$
 for $i = 1,\cdots, n$.
 
 To prove that  $\mathcal{L}(\mathbf{B})$ is also a coefficient-embedding  ideal lattice in $\mathbb{Z}[x]/g(x)$, by Lemma \ref{klemma4ideal}  it is enough to show that 
  $\sigma(x\sigma^{-1}(\mathbf{h}_i)\mod g(x))\in \mathcal{L}(\mathbf{B}),$
 for $i = 1,\cdots, n$.

 Note that for $i = 1,\cdots, n-1$, 
 $$\sigma(x\sigma^{-1}(\mathbf{h}_i)\mod g(x)) = \sigma(x\sigma^{-1}(\mathbf{h}_i)\mod f(x))\in \mathcal{L}(\mathbf{B}).$$
 
 Since$\sigma(f(x)-g(x)) \in  \mathcal{L}(\frac{\mathbf{B}}{d})$, there exists a lattice vector $\mathbf{v} \in \mathcal{L}(\mathbf{B})$ such that $d(f(x)-g(x)) = h_{n,n}(f(x)-g(x))= \sigma^{-1}(\mathbf{v})$. Then for $i = n$, 
 \begin{align*}
 \sigma(x\sigma^{-1}(\mathbf{h}_n)\mod g(x)) &=\sigma (\sum_{i=1}^{n-1} h_{n,i}x^i - h_{n,n} (g(x)-x^n))\\& = \sigma (\sum_{i=1}^{n-1} h_{n,i}x^i- h_{n,n} (f(x)-x^n) + \sigma^{-1}(\mathbf{v}))\\
 &= \sigma(x\sigma^{-1}(\mathbf{h}_n)\mod f(x)) + \mathbf{v} \in \mathcal{L}(\mathbf{B}).
 \end{align*}
 The  theorem  follows.
\end{proof}

\begin{remark}
	The HNF \ $\mathbf{H}$ in the proof can be replaced by any Incomplete Hermite Normal Form.
	
\end{remark}
\begin{remark}\label{rmk:app}
  For most lattice-based cryptosystems, their security is guaranteed by the hardness of lattice problems such as $\gamma$-SVP.  Hence, the hardness of lattice problem in ideal lattice is widely considered as the security foundation of Ring-LWE based cryptosystems. 

  However, the worst-case hardness of ideal lattice $\gamma$-SVP in different polynomial rings are not the same exactly. For example, in the ring $\mathbb{Z}[x]/(x^n+1)$ $n=2^k$ $k\geq 1$, there is a quantum polynomial time algorithm for ideal lattice $\text{exp}(n^{1/2})$-SVP \cite{ref7} \cite{ref8}, but the coefficient is no less than $\text{exp}(n)$ in the majority of polynomial rings.

  Theorem \ref{kthm} indicates that an ideal lattice can be viewed as an ideal lattice in infinitely different polynomial rings. Hence, it's possible to embed the given ideal lattice into a special ring such as $\mathbb{Z}[x]/(x^n+1)$ $n=2^k$ $k\geq 1$ which can help to the solve the hard lattice problems.
	
\end{remark}

\section{Identifying an Ideal Lattice}
\label{section:Identifying an Ideal Lattice}
\subsection{Algorithm}

According to Theorem \ref{thmidentify} and Theorem \ref{kthm},  we propose an algorithm to identify whether a given integer lattice is an ideal lattice or not (Algorithm 1).

\begin{algorithm}[htb]

	\caption{Identifying an ideal lattice}
	\label{alg:identify}
	\begin{algorithmic}[1]
		\Require $\mathbf{B} \in \mathbb{Z}^{n \times n}$, $\text{rank}(\mathbf{B})=n$.
		\Ensure False if $\mathcal{L}(\mathbf{B})$ is not a coefficient-embedding ideal lattice; Otherwise output a set $S\subset \mathbb{Z}^n$  s.t. for any $(g_1,g_2,...,g_n)\in S$, $\mathcal{L}(\mathbf{B})$ can be embedded as an ideal  into $\mathbb{Z}[x]/(g_1+g_2x^1+...+g_nx^{n-1}+x^n)$.
		\State  Compute  any  Incomplete Hermit Normal Form $\mathbf{B}'= \begin{pmatrix} \mathbf{D}&\mathbf{0} \\ \mathbf{b'}&b_{n,n}' \end{pmatrix}$ of $\mathbf{B}$ by unimodular transformation;
		\If{$b_{n,n}' \not\vert\  \mathbf{B}$} return False;
		\EndIf
	    \If {$\begin{pmatrix} \mathbf{0}&\mathbf{D}\end{pmatrix}\mathbf{B}^{-1}\notin\mathbb{Z}^{(n-1)\times n} $} return False;
		\EndIf
		\State  Output $S = \frac{1}{b'_{n,n}}(\begin{pmatrix}0&\mathbf{b}'\end{pmatrix}+\mathcal{L}(\mathbf{B}))$.
	\end{algorithmic}
\end{algorithm}

\begin{remark}\label{HNFremark}
	In Step 1, we can also compute the HNF of $\mathcal{L}(\mathbf{B})$, and then use the divisibility relation described in Lemma \ref{klemma} to rule out some integer lattices that can't be embedded as an ideal into any polynomial ring. This may speedup the algorithm in practice, since many "random" integer lattices can not pass such check.
\end{remark}

The correctness of Algorithm 1 is guaranteed by Theorem \ref{thmidentify} and Theorem \ref{kthm}
%\paragraph{Correctness} The correctness of Algorithm 1 is guaranteed by Theorem \ref{thmidentify}. 

%\paragraph{Complexity} We next analyze the time complexity.
\subsection{Complexity}
For Step 1, we can use Algorithm 2 to compute an Incomplete Hermite Normal Form for   $\mathbf{B} \in \mathbb{Z}^{n \times n}$ with a unimodular transformation, whose idea has already been described in Section 2.3.

\begin{algorithm}[htb]
	\caption{Computing an Incomplete Hermite Normal Form}
	\label{alg:ihnf}
	\begin{algorithmic}[1]
		\Require $\mathbf{B} \in \mathbb{Z}^{n \times n}$, $\text{rank}(\mathbf{B})=n$.
		\Ensure An Incomplete Hermit Normal Form of $\mathbf{B}$ by unimodular transformation.
		\For{$i$ from 1 to $n-1$}
		\State Use Extended Euclidean Algorithm with input $(b_{i,n},b_{i+1,n})$ to find $x$, $y$, $d$ s.t. $xb_{i,n}+yb_{i+1,n}=\gcd(b_{i,n},b_{i+1,n})=d$;
		\State  Update	$\begin{pmatrix} \mathbf{b_i} \\ \mathbf{b_{i+1}}\end{pmatrix}$:= $\begin{pmatrix}-b_{i+1,n}/d&b_{i,n}/d\\x&y\end{pmatrix}\begin{pmatrix} \mathbf{b_i} \\ \mathbf{b_{i+1}}\end{pmatrix}$;
		\EndFor
		\State  Output $\mathbf{B}$.
	\end{algorithmic}
\end{algorithm}

It is easy to check that the integer matrix $\begin{pmatrix}-b_{i+1,n}/d&b_{i,n}/d\\x&y\end{pmatrix}$ is unimodular since its determinant is $-1$. Hence, the transformation in Step 3 will not change the lattice $\mathcal{L}(\mathbf{B})$. After Step 3 for each $i$, we have $b_{i,n} = 0$ and $b_{i+1,n}= d$ computed by Step 2, which means that the output is in Incomplete Hermite Normal Form.

For the time complexity, we assume that for the input  $\mathbf{B}$, the absolute value of its every entry is bounded by $2^B$. 

It is easy to conclude that for the $i$-th loop, at the beginning, we have 
\begin{itemize}
	\item $|b_{i,j}|< 2^{i*B+1}$, $|b_{i+1,j}|< 2^{B}$ for $j =1,\cdots, n$, especially we have $|b_{i,n}|< 2^{B}$;
	\item $|x|<  2^{B}$, $|y|< 2^{B}$, $d<  2^{B}$.
\end{itemize}
Note that the Extended Euclidean Algorithm takes $\mathcal{O}(\text{log}\vert a \vert \text{log}\vert b\vert)$ bit operations on input $(a, b)$. Then for the $i$-th loop, with the plain integer multiplication we have:
\begin{itemize}
	\item Step 2 costs $\mathcal{O}(B^2)$ bit operations;
	\item Step 3 costs $\mathcal{O}(i*nB^2)$ bit operations;
\end{itemize} 
Hence, for the total $n$ loops, Algorithm \ref{alg:ihnf} needs $\mathcal{O}(n^3B^2)$ bit operations, and we have the following result.

\begin{lemma}\label{clemma}
	For a non-singular matrix $\mathbf{B}\in\mathbb{Z}^{n \times n}$, the absolute value of whose entries is bounded by $2^B$, Algorithm \ref{alg:ihnf} takes $\mathcal{O}(n^3B^2)$ bit operations to compute an Incomplete Hermite Normal Form of $\mathbf{B}$ by a unimodular transformation.
\end{lemma}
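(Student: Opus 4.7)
The plan is to track the bit-lengths of the matrix entries through the execution of Algorithm~\ref{alg:ihnf} and then bound the cost of each iteration using those size estimates.

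The key invariant to establish by induction on $i$ is that at the start of iteration $i$, the entries of row $i$ satisfy $|b_{i,j}| < 2^{iB+1}$, while every as-yet untouched row, in particular row $i+1$, still satisfies $|b_{i+1,j}| \le 2^B$. The base case $i=1$ follows from the input hypothesis. For the inductive step, observe that in iteration $i-1$ row $i$ is overwritten by $x\cdot(\text{old row }i-1) + y\cdot(\text{old row }i)$, where $x,y$ are B\'ezout coefficients for $\gcd(b_{i-1,n}, b_{i,n})$. A useful subclaim is that at the start of every iteration, the current value of $b_{i,n}$ is bounded by $2^B$: at the start of iteration $i-1$ this holds because row $i$ was still untouched, and after iteration $i-1$ the new $b_{i,n}$ equals the gcd of two such values, hence is still $\le 2^B$. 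From the standard bound on B\'ezout coefficients we get $|x|, |y| \le 2^B$, and combining a row of bit-length $\le (i-1)B+1$ with one of bit-length $\le B$ via coefficients of bit-length $\le B$ yields a row of bit-length at most $iB+1$, closing the induction.

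With these size bounds in hand, the per-iteration cost follows routinely. Step~2 (Extended Euclidean Algorithm) on two $B$-bit inputs costs $O(B^2)$ bit operations. In Step~3 the $2 \times 2$ unimodular matrix has entries $-b_{i+1,n}/d,\ b_{i,n}/d,\ x,\ y$, all bounded by $2^B$, so for each of the $n$ columns one performs a constant number of multiplications of a $B$-bit integer by an integer of bit-length $\le iB+1$, each costing $O(iB^2)$ in the schoolbook model, plus a few additions of lower order. Iteration $i$ therefore costs $O(niB^2)$ bit operations, and summing over $i=1,\ldots,n-1$ gives the claimed $O(n^3B^2)$ total. The main subtlety is the careful separation of the two different size regimes in the rows being combined (one that has already grown over earlier iterations and one that is still at its original size), which is exactly what the induction above supplies.
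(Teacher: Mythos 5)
Your proof is correct and follows essentially the same route as the paper: the same size invariants ($|b_{i,j}|<2^{iB+1}$ for the current row, $2^B$ for untouched rows and for the last-column entries and B\'ezout coefficients), the same $\mathcal{O}(B^2)$ cost for Step~2 and $\mathcal{O}(inB^2)$ for Step~3, summed to $\mathcal{O}(n^3B^2)$. The only difference is that you spell out by induction the entry-growth bounds that the paper merely asserts, which is a welcome addition but not a different argument.
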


The most time-consuming part of Algorithm 1 is to judge whether $\begin{pmatrix}0&\mathbf{D}\end{pmatrix}\mathbf{B}^{-1}\in\mathbb{Z}^{(n-1)\times n}$ or not. In fact, there is an equivalent description for this and we refer to the results of Birmpilis et al \cite{ref30}.

\begin{theorem}[See Theorem 4 of \cite{ref30} ]\label{LatMemthm}
Let $\mathbf{B}\in\mathbb{Z}^{n\times n}$ be nonsingular with Smith form $\mathbf{S}$ and Smith massager $\mathbf{M}$. Let $s$ be the largest invariant factor of $\mathbf{S}$. The following lattices are identical:

$L_1=\{v|v\mathbf{B}^{-1}\in\mathbb{Z}^{1\times n}\}$

$L_2=\{v|v\mathbf{M}\equiv0_{1\times n}$ cmod $\mathbf{S}\}$
\end{theorem}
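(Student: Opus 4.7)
The natural strategy is to prove $L_1 \subseteq L_2$ directly and then match the indices $[\mathbb{Z}^{1\times n}:L_1]$ and $[\mathbb{Z}^{1\times n}:L_2]$ to force equality, which neatly sidesteps having to reason about the non-integral matrix $\mathbf{B}^{-1}$ on the $L_2$ side. The easy containment uses property~(i) of the Smith massager: any $v \in L_1$ has the form $v = w\mathbf{B}$ with $w \in \mathbb{Z}^{1\times n}$, so $v\mathbf{M} = w(\mathbf{BM}) \equiv 0$ cmod $\mathbf{S}$, whence $v \in L_2$.

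To compute $[\mathbb{Z}^{1\times n}:L_2]$, I would introduce the group homomorphism $\phi:\mathbb{Z}^{1\times n} \to \prod_{i=1}^{n}\mathbb{Z}/s_i\mathbb{Z}$ sending $v \mapsto v\mathbf{M}$ cmod $\mathbf{S}$, whose kernel is exactly $L_2$. Surjectivity of $\phi$ is where property~(ii) of the Smith massager is essential: taking the guaranteed integer matrix $\mathbf{W}$ with $\mathbf{WM}\equiv \mathbf{I}_n$ cmod $\mathbf{S}$, every target $\mathbf{a}\in\mathbb{Z}^{1\times n}$ is hit by $v := \mathbf{aW}$, since $\phi(v) = \mathbf{aWM} \equiv \mathbf{a}$ cmod $\mathbf{S}$. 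The first isomorphism theorem then gives $[\mathbb{Z}^{1\times n}:L_2] = \prod_{i=1}^{n} s_i = \det \mathbf{S}$. On the other side, $L_1 = \mathcal{L}(\mathbf{B})$ has index $|\det \mathbf{B}|$, and the relation $\mathbf{S} = \mathbf{PBT}$ with $\mathbf{P},\mathbf{T}$ unimodular forces $|\det \mathbf{B}| = \det \mathbf{S}$. The matching indices combined with $L_1 \subseteq L_2$ force $L_1 = L_2$.

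The main obstacle I anticipate is mostly notational: I need to set up the column-wise operation ``cmod $\mathbf{S}$'' as a genuine quotient of $\mathbb{Z}^{1\times n}$ by the sublattice whose $j$-th coordinate is divisible by $s_j$, so that ``$v\mathbf{M} \equiv 0$ cmod $\mathbf{S}$'' is really the kernel of a homomorphism into $\prod_i \mathbb{Z}/s_i\mathbb{Z}$. Once this quotient picture is in place, the proof reduces to the two short observations above, with the integrality of $\mathbf{W}$ in property~(ii) playing the decisive role of upgrading an algebraic congruence into surjectivity of a map between finite abelian groups.
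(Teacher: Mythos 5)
Your argument is correct, but note that the paper does not prove this statement at all: it is imported verbatim as Theorem~4 of the cited work of Birmpilis et al.\ \cite{ref30}, so there is no internal proof to match against. What you supply is a clean, self-contained verification using only the two defining properties of a Smith massager. The containment $L_1\subseteq L_2$ is exactly right: $v\in L_1$ means $v=w\mathbf{B}$ with $w\in\mathbb{Z}^{1\times n}$, and since property~(i) says every entry of the $j$-th column of $\mathbf{B}\mathbf{M}$ is divisible by $s_j$, the $j$-th entry of $v\mathbf{M}=w(\mathbf{B}\mathbf{M})$ is divisible by $s_j$, i.e.\ $v\in L_2$. Your index count is also sound: the map $\phi:\mathbb{Z}^{1\times n}\to\prod_j\mathbb{Z}/s_j\mathbb{Z}$, $v\mapsto v\mathbf{M}$ cmod $\mathbf{S}$, is a homomorphism with kernel $L_2$, and property~(ii) with the integer matrix $\mathbf{W}$ makes it surjective via $v=\mathbf{a}\mathbf{W}$, so $[\mathbb{Z}^{1\times n}:L_2]=\prod_j s_j=\det\mathbf{S}$; meanwhile $L_1=\mathcal{L}(\mathbf{B})$ has index $|\det\mathbf{B}|$, which equals $\det\mathbf{S}$ because $\mathbf{S}=\mathbf{P}\mathbf{B}\mathbf{T}$ with $\det\mathbf{P},\det\mathbf{T}=\pm1$ and the invariant factors are positive (nonsingularity of $\mathbf{B}$ guarantees all $s_j\neq 0$, which is also what makes both indices finite). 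Equal finite indices together with $L_1\subseteq L_2$ then force $L_1=L_2$. The only point worth making explicit in a write-up is the one you flag yourself: that ``$v\mathbf{M}\equiv 0$ cmod $\mathbf{S}$'' is the kernel condition of $\phi$, i.e.\ that cmod is reduction of the $j$-th coordinate modulo $s_j$, coordinate by coordinate. Compared with simply citing \cite{ref30}, your route has the advantage of showing the equality of lattices is an elementary consequence of the massager axioms and a determinant/index comparison, independent of the algorithmic machinery in which the original theorem is embedded.
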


 By Theorem \ref{LatMemthm}, $L_1=L_2$, which means to judge whether $\begin{pmatrix}0&\mathbf{D}\end{pmatrix}\mathbf{B}^{-1}$ $\in\mathbb{Z}^{(n-1)\times n}$ or not , it's sufficient to verify $\begin{pmatrix}0&\mathbf{D}\end{pmatrix}\mathbf{M}\equiv 0_{(n-1)\times n}$ cmod $\mathbf{S}$. $\mathbf{S}$ is the Smith Norm Form of $\mathbf{B}$, and it's diagonal.  %Hence, $\begin{pmatrix}0&\mathbf{D}\end{pmatrix}\mathbf{M}\equiv 0_{(n-1)\times n}\mathbf{c}\mod$$\mathbf{S}$ means for any $i=1,2,\cdots,n$ the $i$-th column of  $\begin{pmatrix}0&\mathbf{D}\end{pmatrix}\mathbf{M}$ is a $\mathbf{0}$ vector mod $s_i$, where $s_i$ is the $i$-th element of the diagonal of $\mathbf{S}$. 
 The following theorem is also proposed by Birmpilis et al
\cite{ref30} to compute the Smith Normal Form $\mathbf{S}$ and a reduced Smith Massager $\mathbf{M}$ of the input matrix ($\mathbf{M}$ is reduced $\mathbf{c}$mod $\mathbf{S}$)

\begin{theorem}[See Theorem 19 of \cite{ref30}]\label{Sthm}
There exists a Las Vegas algorithm that takes as input a nonsingular $\mathbf{A}\in\mathbb{Z}^{n\times n}$, and returns as output the Smith Normal Form $\mathbf{S}\in\mathbb{Z}^{n\times n}$ and a reduced Smith Massager $\mathbf{M}\in\mathbb{Z}^{n\times n}$ of the input matrix. The cost of the algorithm is $\mathcal{O}(n^{\omega}\text{B}(\log n+\log \|A\|)(\log n)^2)$ bit operations. The algorithm returns Fail with probability at most 1/2.
\end{theorem}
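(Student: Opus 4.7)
The plan is to combine a randomized Smith-form subroutine with fast matrix multiplication and then postprocess its output into a reduced Smith Massager. The first step is to recover the largest invariant factor $s$: solving a single random system $\mathbf{A}\mathbf{x} = \mathbf{b}$ over $\mathbb{Q}$ by $p$-adic lifting and reading off the denominator yields $s$ with constant probability, inside the stated bit budget. Once $s$ is known, all subsequent work is carried out in $\mathbb{Z}/s\mathbb{Z}$, which controls intermediate entry size, because any Smith Massager can be reduced cmod $\mathbf{S}$ without changing its defining properties.

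Next, I would compute the Smith Normal Form $\mathbf{S}$ together with a transforming matrix $\mathbf{M}$ by a block elimination modulo $s$, replacing naive row-by-row pivoting by $O(n^{\omega})$ block multiplications and inversions. A standard route is to randomize $\mathbf{A}$ by conjugation with a few unimodular factors, compute a modular Hermite Normal Form as an intermediate, and then peel off invariant factors recursively by block diagonalization. Each field operation in $\mathbb{Z}/s\mathbb{Z}$ costs $\mathsf{B}(\log s)$ bit operations, with $\log s = O(\log n + \log \|\mathbf{A}\|)$ after a Hadamard-bound accounting, giving the stated $O(n^{\omega}\,\mathsf{B}(\log n + \log \|\mathbf{A}\|)(\log n)^2)$ total once the polylogarithmic overhead for lifting is included. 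To convert this Monte Carlo routine into a Las Vegas algorithm, both conditions defining a Smith Massager must be verified directly: the congruence $\mathbf{A}\mathbf{M} \equiv 0$ cmod $\mathbf{S}$ is a single block multiplication, and the existence of a right inverse $\mathbf{W}$ reduces to a rank check on $\mathbf{M}$ modulo each prime divisor of $s$, which fits in the same budget. A failed check triggers a restart, and the randomization is tuned so that each attempt succeeds with probability at least $1/2$.

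The principal obstacle is controlling expression swell while still extracting the exact invariants. A direct fraction-free elimination over $\mathbb{Z}$ inflates entries to roughly $n \log \|\mathbf{A}\|$ bits and would destroy the claimed bound; forcing all arithmetic into $\mathbb{Z}/s\mathbb{Z}$ removes the swell, but then one must argue that random unimodular preconditioning makes the pivot pattern regular with constant probability, so that each modular inversion is valid and the recovered diagonal truly coincides with the integer Smith form of $\mathbf{A}$. This correctness-of-randomized-SNF analysis, combined with a tight bound on the size of the reduced Smith Massager, is the technical core of the argument in \cite{ref30}.
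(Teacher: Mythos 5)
This statement is not proved in the paper at all: it is imported verbatim as Theorem~19 of \cite{ref30} (Birmpilis, Labahn and Storjohann), and the paper's ``proof'' is simply the citation. So your proposal cannot be measured against an internal argument; the relevant question is whether your sketch would itself establish the result, and as written it would not. You correctly identify the right ingredients (largest invariant factor via a random linear system, working with reductions modulo the invariant factors to control expression swell, fast matrix multiplication, and a verification step to upgrade Monte Carlo to Las Vegas), but the load-bearing steps are asserted rather than proved, and you say so yourself in the final sentence, where the ``technical core'' is deferred back to \cite{ref30}.

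Two concrete steps would fail as stated. First, you claim the Las Vegas verification of the second massager condition ``reduces to a rank check on $\mathbf{M}$ modulo each prime divisor of $s$''; this presupposes the factorization of $s$, an integer with $O(n(\log n+\log\|\mathbf{A}\|))$ bits, which is far outside the stated bit budget. The actual certification must avoid factoring (e.g.\ by verifying the existence of $\mathbf{W}$ with $\mathbf{W}\mathbf{M}\equiv \mathbf{I}_n$ cmod $\mathbf{S}$ through the structure of the computation itself). Second, you repeatedly treat $\mathbb{Z}/s\mathbb{Z}$ as a field (``each field operation'', ``each modular inversion is valid''), but $s$ is composite in general, so block elimination with inversions modulo $s$ needs a genuinely different justification; this is precisely where the Smith-massager machinery of \cite{ref30} replaces naive pivoting. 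Finally, the complexity claim, including the $(\log n)^2$ factor, is attributed to unspecified ``polylogarithmic overhead'' rather than derived. As a roadmap your sketch is reasonable, but it does not constitute a proof of the cited theorem.
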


$\text{B}(d)=\mathcal{O}(M(d)\log d)$ and $M(d)$ bounds the number of bit operations required to multiply two integers bounded in magnitude by $2^d$.  We take $M(d)=\mathcal{O}(d^2)$. $\omega$ is a valid exponent of matrix multiplication: two $n\times n$ matrices can be multiplied in $\mathcal{O}(n^{\omega})$ operations from the domain of the entries, and the best known upper bound is $\omega<2.37286$ by Alman and Williams \cite{ref31}.

After computing $\mathbf{M}$ and $\mathbf{S}$, it remains to compute $\begin{pmatrix}0&\mathbf{D}\end{pmatrix}\mathbf{M}$ $\mathbf{c}$mod $\mathbf{S}$. It's easy to check that the entries of the output of Algorithm 2 $\begin{pmatrix}0&\mathbf{D}\end{pmatrix}$ are bounded by $2^{3B}$.  We first consider the computation of the remainder modulo $Y$ of the product of two integers. Recall that $\text{Rem}(ab,Y)$ has cost bounded by $\mathcal{O}((\log ab/Y)(\log Y))$, so the cost of computing the $i$-th element of  $\begin{pmatrix}0&\mathbf{D}\end{pmatrix}\mathbf{M}$ $\mathbf{c}$mod $\mathbf{S}$ is bounded by $\mathcal{O}(n(3B)(\log s_i))$. Hence, the cost of computing  $\begin{pmatrix}0&\mathbf{D}\end{pmatrix}\mathbf{M}$ $\mathbf{c}$mod $\mathbf{S}$ is bounded by $\mathcal{O}(3nB\log |\text{det}(\mathbf{B})|)=\mathcal{O}(3n^2B(B+\log n))$

Combining Lemma 7, Theorem 4 and the discussion above, we have

\begin{theorem}\label{complexity}
Given $\mathbf{B} \in \mathbb{Z}^{n \times n}$, $\text{rank}(\mathbf{B})=n$, and the absolute value of the entries of $\mathbf{B}$ is bounded by $2^B$, then there is a Las Vegas algorithm with expected complexity  $\mathcal{O}(n^3B(B+\log n))$ to identify whether $\mathcal{L}(\mathbf{B})$ is an ideal lattice or not.
\end{theorem}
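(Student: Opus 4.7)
The plan is to bound the running time of Algorithm~\ref{alg:identify} by analysing its three phases separately: (i) the Incomplete Hermite Normal Form computation in Step~1; (ii) the divisibility test in Step~2; and (iii) the integrality test in Step~3. For (i) I would simply invoke Lemma~\ref{clemma}, which gives $\mathcal{O}(n^3 B^2)$ bit operations to produce an IHNF of $\mathbf{B}$ together with the block $\mathbf{D}$, the vector $\mathbf{b}'$, and the pivot $b'_{n,n}$. Step~2 is then a straightforward sweep that tests divisibility of the $n^2$ entries of $\mathbf{B}$ by the single integer $b'_{n,n}$, which costs $\mathcal{O}(n^2 B)$ bit operations and is absorbed by the target bound.

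The nontrivial piece is Step~3, since a naive rational test of the condition $\begin{pmatrix}\mathbf{0} & \mathbf{D}\end{pmatrix}\mathbf{B}^{-1} \in \mathbb{Z}^{(n-1)\times n}$ would require expensive exact inversion of $\mathbf{B}$ over $\mathbb{Q}$. Here I would apply Theorem~\ref{LatMemthm} to replace that condition by the equivalent modular congruence $\begin{pmatrix}\mathbf{0} & \mathbf{D}\end{pmatrix}\mathbf{M} \equiv \mathbf{0}\ \mathrm{cmod}\ \mathbf{S}$, where $\mathbf{S}$ is the Smith normal form and $\mathbf{M}$ a reduced Smith massager of $\mathbf{B}$. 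By Theorem~\ref{Sthm}, both objects can be produced by a Las Vegas algorithm with expected bit cost $\mathcal{O}(n^{\omega}\,\mathrm{B}(\log n + \log\|\mathbf{B}\|)(\log n)^2)$; plugging in $\log\|\mathbf{B}\| = \mathcal{O}(B)$, $\mathrm{B}(d) = \mathcal{O}(d^2 \log d)$, and $\omega \le 3$, this fits into $\mathcal{O}(n^3 B(B+\log n))$ up to polylog factors. Evaluating $\begin{pmatrix}\mathbf{0} & \mathbf{D}\end{pmatrix}\mathbf{M}\ \mathrm{cmod}\ \mathbf{S}$ then reduces to $(n-1)n$ inner products whose partial sums are reduced modulo the invariant factors of $\mathbf{S}$, and since Hadamard's bound gives $\sum_i \log s_i = \log|\det \mathbf{B}| = \mathcal{O}(n(B+\log n))$, the total cost of this evaluation is also $\mathcal{O}(n^3 B(B+\log n))$.

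Summing the three phases yields the claimed expected bound $\mathcal{O}(n^3 B(B+\log n))$, and the Las Vegas nature is inherited from Theorem~\ref{Sthm}, the only randomised component; since its success probability is at least $1/2$, the expected number of repetitions is $\mathcal{O}(1)$ and the total expected cost is unaffected. The main obstacle I anticipate is careful bookkeeping of the bit-sizes of the entries of $\mathbf{D}$ and $\mathbf{M}$ after reduction modulo $\mathbf{S}$, so that the cost of the final modular matrix multiplication provably collapses to $\mathcal{O}(n^3 B(B+\log n))$ rather than the larger size one would get without exploiting the $\mathrm{cmod}$ reductions. All other ingredients are direct invocations of Lemmas and Theorems already established earlier in the paper.
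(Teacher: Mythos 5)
Your proposal is correct and follows essentially the same route as the paper's own proof: Lemma~\ref{clemma} for Step~1, Theorem~\ref{LatMemthm} to convert the integrality test of $\begin{pmatrix}\mathbf{0}&\mathbf{D}\end{pmatrix}\mathbf{B}^{-1}$ into a congruence cmod $\mathbf{S}$, Theorem~\ref{Sthm} for the Smith form and reduced massager, and the bound $\sum_i\log s_i=\log|\det\mathbf{B}|=\mathcal{O}(n(B+\log n))$ for the final modular product, with the Las Vegas character inherited from Theorem~\ref{Sthm}. The only difference is cosmetic: you take $\omega\le 3$ and settle for the bound ``up to polylog factors,'' whereas the paper keeps $\omega<2.37286$ so that the slack $n^{3-\omega}$ absorbs the $\mathrm{B}(\cdot)$ and $(\log n)^2$ factors and yields the stated $\mathcal{O}(n^3B(B+\log n))$ exactly.
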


\subsection{Related research}

In 2007, Ding and Lindner \cite{ref13} already proposed an algorithm for identifying ideal lattices, but we find that there is a flaw in their algorithm. More exactly, some ideal lattices can't be identified by their algorithm.

We find some non-trivial ideal lattices which can't be identified by Ding and Lindner's algorithm. The following is an example:
\[\mathbf{B}=\begin{pmatrix}6&-8&-5\\3&-7&-4\\6&1&-1\end{pmatrix}\]

The row vectors of $\mathbf{B}$ span a full-rank ideal lattice in the ring $\mathbb{Z}[x]/x^3+3x^2+x^1-3$. However, with the input $\mathbf{B}$, Ding and Lindner's algorithm return false. 

More exactly, in their algorithm, the lattice is spanned by column vectors, so the input matrix should be $\mathbf{B}^T$. They first transform $\mathbf{B}^T$ into an upper-triangular Hermite Normal Form $\mathbf{H}$.
\[\mathbf{H}=\begin{pmatrix}9&6&0\\0&1&0\\0&0&1\end{pmatrix}\]
Then they compute the adjugate matrix $\mathbf{A}$ of $\mathbf{H}$.
\[\mathbf{A}=\begin{pmatrix}1&-6&0\\0&9&0\\0&0&9\end{pmatrix}\]
Let $\mathbf{I_n}$ be the unit matrix of dimension $n$, and $\mathbf{M}$ be a matrix only related to the dimension $n$ (For this example, the dimension is 3).
\[\mathbf{M}=\begin{pmatrix}\mathbf{0}&0\\\mathbf{I_{n-1}}&\mathbf{0}\end{pmatrix}\]
In step 4 of their algorithm, they need to verify whether only the last column $\mathbf{AMH}$ mod $\det(\mathbf{B})$ is equal to $\mathbf{0}$ or not. If the input lattice basis $\mathbf{B}$ spans an ideal lattice, they believe by default only the last column $\mathbf{AMH}$ mod $\det(\mathbf{B})$ is not equal to $\mathbf{0}$. However, $\mathbf{AMH} \equiv \mathbf{0}$ mod $\det(\mathbf{B})$, which causes their algorithm to return "false". Apparently, they ignore the situation that all the column of $\mathbf{AMH}$ mod $\det(\mathbf{B})$ is equal to $\mathbf{0}$.

Ignoring the flaw above, our algorithm still performs better than theirs in two aspects:
\begin{itemize}
    \item Our algorithm outputs more. Ding and Lindner's algorithm outputs a single polynomial ring of the ring class if the input lattice is an ideal lattice but ours outputs the entire ring class.
    \item The time complexity of our algorithm is lower. It is claimed  in \cite{ref13}  that the algorithm presented by Ding and Lindner to identify an ideal lattice  costs $\mathcal{O}(n^4B^2)$ bit operations. However, we have to point out that there is also a flaw in the complexity analysis in  $\mathcal{O}(n^4B^2)$.  The algorithm in \cite{ref13}  needs to compute $n-2$ powers of $\mathbf{B}$, that is, $\mathbf{B}^k$ for $k = 2, \cdots, n-1$. It is claimed this can be done within $\mathcal{O}(n^4B^2)$ bit operations. However, when $k$ grows bigger, the bit size of the entries in $\mathbf{B}^k$ will be $\mathcal{O}(kB)$ instead of $B$. Hence the correct time complexity  should be 
$$\sum_{k=2}^{n-1}\mathcal{O}(n^3*k*B^2) =  \mathcal{O}(n^5B^2).$$
\end{itemize}

%\begin{remark}\label{compare}
%It is claimed  in \cite{ref13}  that the algorithm presented by Ding and Lindner to identify an ideal lattice  costs $\mathcal{O}(n^4B^2)$ bit operations. However, we have to point out that there is some flaw in the complexity analysis in  $\mathcal{O}(n^4B^2)$.  The algorithm in \cite{ref13}  needs to compute $n-2$ powers of $\mathbf{B}$, that is, $\mathbf{B}^k$ for $k = 2, \cdots, n-1$. It is claimed this can be done within $\mathcal{O}(n^4B^2)$ bit operations. However, when $k$ grows bigger, the bit size of the entries in $\mathbf{B}^k$ will be $\mathcal{O}(kB)$ instead of $B$. Hence the correct time complexity  should be 
%$$\sum_{k=2}^{n-1}O(n^3*k*B^2) =  O(n^5B^2).$$

%So our algorithm is faster than the algorithm in \cite{ref13}, due to the fact that our algorithm just checks if the systems of equations have integer solutions or not.

%Besides, the algorithm in \cite{ref13}  outputs a single polynomial ring of the ring class if the input lattice is an ideal lattice but we compute the entire class.
%\end{remark}
\subsection{Experiment}
Using our algorithm, we conducted several experiments, and the experimental results are presented in Appendix~\ref{AppendixA}.
\section{Conclusion}
In this paper, we explore the connection between integer lattices and coefficient-embedding ideal lattices. We have three main contributions:

Firstly, we find and proof an ideal lattice can be viewed as an ideal lattice in infinitely many different polynomial rings. This interesting phenomenon may contribute to the solution to hard ideal lattice problems as mentioned in Remark \ref{rmk:app}.

Secondly, we propose an efficient algorithm for identifying ideal lattices, and compared to related work, our algorithm has more advantages.

Finally, we provide an efficient open source implementation of our algorithm for identifying ideal lattices in SageMath. Our experimental results are presented in Appendix~\ref{AppendixA}.

%In this paper, we reveal the embedding relation between the coefficient-embedding ideal lattice and the integer lattice, which gives us a new method to solve the ideal lattice problems by embedding a given ideal lattice into the well-studied polynomial ring. 
 
 %Hence, it's not proper anymore to judge the security of a crypstosystem based on ideal lattice by just considering a single ring. The embedding relation introduced in this article no doubt increases the difficulties of evaluating the security for any crypstosystem based on ideal lattices. 

%Since the ideal lattice is a special case of the module lattice, it's possible that there is a similar embedding relation between integer lattices and module lattices. Therefore, it's worth researching that how to extend our theory to module lattice.

%As a direct production of our theorem, we introduce an efficient method to identifying an ideal lattice. Since our algorithm only computes the gcd of the last column of the given lattice, it avoids the explosion of the matrices coefficients. Besides,our algorithm not only identify an ideal lattice, but also computes the entire class of the rings which the given integer lattice can be embedded into
\newpage
\bibliographystyle{unsrt} 
\bibliography{sample-base}

\appendix
\section{Experiments}
\label{AppendixA}
In this section, we present our experimental results and some intersting findings about density of ideal lattice. The experiments were conducted in the SageMath 9 environment on a personal computer equipped with an Intel Core i7-13700KF 3.40 GHz processor. The source code for the experiments is open-sourced and available at \begin{center}
  \textcolor{blue}{\url{https://github.com/fffmath/Identifying-Ideal-Lattice}}.
\end{center} It allows simulations of experiments with input dimensions, bounds, and the number of experiments.

We compared our algorithm with the one proposed by Ding and Lindner~\cite{ref13}. Under the same parameters, our algorithm demonstrated a significant advantage in terms of runtime.

%During the experiments, we found a flaw in their algorithm. It misclassified some examples that should have been ideal lattices. Not only did it fail to correctly identify unit lattices, but it also struggled with non-trivial lattice bases. The algorithm ignored the case of $zF(Aq) \equiv 0 \pmod{\det(B)}$ and assumed $zF(Aq) \not\equiv 0 \pmod{\det(B)}$ in the proof of their Theorem 1. However, the error rate decreased with increasing matrix dimensions in the simulation. This could be attributed to the decreasing proportion of ideal lattices as dimensions increased, reducing the likelihood of misclassifying them. Additionally, we guess the probability of $zF(Aq) \equiv 0 \pmod{\det(B)}$ decreased with increasing dimensions, further reducing the error rate.

Regarding algorithm runtime, we conducted multiple experiments with different variables. For input parameters \textit{dim} and \textit{bound}, we randomly generated a \textit{dim}-dimensional matrix within the specified \textit{bound} as the lattice basis. In other words, this results in the generation of a dim $\times$ dim matrix, where each element of the matrix falls within the range of $-2^{\textit{bound}}$ to $2^{\textit{bound}}$.

Two scenarios were considered:

\begin{itemize}
    \item Fixing the dimension (\textit{dim}): We kept \textit{dim} constant and recorded the runtime as \textit{bound} increased gradually.
    \item Fixing the bound (\textit{bound}): We kept \textit{bound} constant and recorded the runtime as \textit{dim} increased.
\end{itemize}

The relevant experimental results can be found in Figure~\ref{fig:lattice}.

For parameters with \textit{dim} less than 300, we conducted 100 experiments for each parameter and recorded the average time consumption as the time record. We observed that these data have very low variance, with each data point closely approaching the mean.

For parameters with large \textit{dim}, due to the longer individual runtime, we performed five experiments for each group and used the average of these five values as the time consumption.

\begin{figure}[htbp]
    \hspace{-2em} % Adjust the value as needed for leftward shift
    \subfloat[dim fixed]
    {
        \begin{minipage}{0.25\textwidth} % Adjust the width as needed
            \centering
            \includegraphics[scale=0.23]{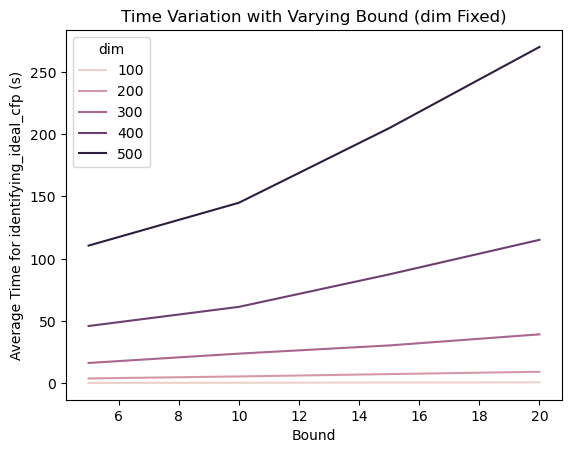}
        \end{minipage}
    }
    \subfloat[bound fixed]
    {
        \begin{minipage}{0.25\textwidth} % Adjust the width as needed
            \centering
            \includegraphics[scale=0.23]{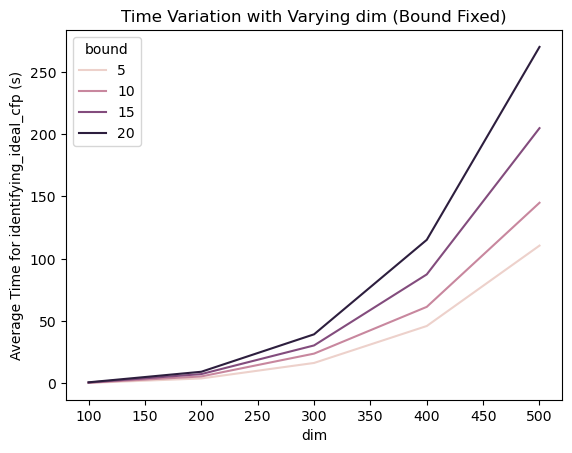}
        \end{minipage}
    }
    \caption{Cost time for our algorithm using random lattice as input} %
    \label{fig:lattice}
\end{figure}

Note that as dimensions or bounds increased, the proportion of ideal lattices became very small. Therefore, most of the generated lattices in the former experiments weren't ideal lattice, resulting in runtime data just be not suitable for ideal lattice input.

To further explore ideal lattices, we conducted additional experiments using ideal lattice as input. We randomly selected polynomials $f$ with coefficients in \{-1,0,1\} and $g$ with coefficients in (${-2^{\textit{bound}}, 2^{\textit{bound}}}$) and computed the lattice basis of the principal ideal generated by $g$ in $\mathbb{Z}[x]/f(x)$, ensuring it is an ideal lattice. In such case, we take the coefficient vectors of $x^ig(x)\text{mod}f(x)$ as the lattice basis, and the reason why we limit the coefficients of $f(x)$ in \{-1,0,1\} is to decrease the exploration of the coefficients of ideal lattice basis generated by $g$. Similarly as former experiments, we also performed experiments with fixed dimensions, recording the runtime as \textit{bound} varied, and fixed bounds, recording the runtime as \textit{dim} varied. The relevant experimental results can be found in Figure~\ref{fig:ideal-lattice}. 

\begin{figure}[htbp]
    \hspace{-2em} % Adjust the value as needed for leftward shift
    \subfloat[dim fixed]
    {
        \begin{minipage}{0.25\textwidth} % Adjust the width as needed
            \centering
            \includegraphics[scale=0.23]{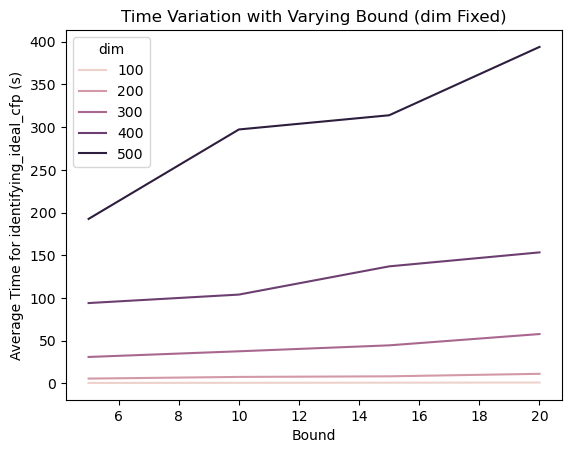}
        \end{minipage}
    }
    \subfloat[bound fixed]
    {
        \begin{minipage}{0.25\textwidth} % Adjust the width as needed
            \centering
            \includegraphics[scale=0.23]{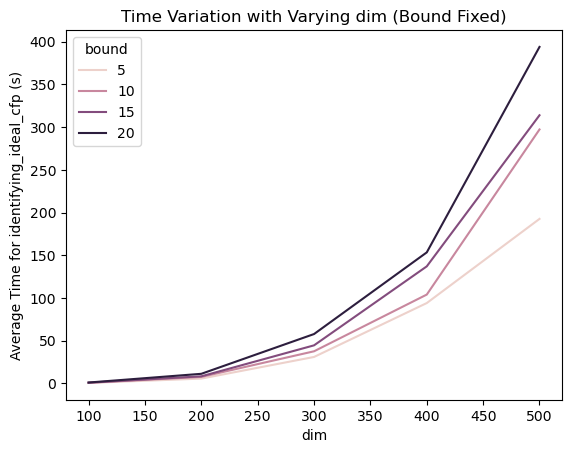}
        \end{minipage}
    }
    \caption{Cost time for our algorithm using random lattice as input} %
    \label{fig:ideal-lattice}
\end{figure}

To facilitate the comparison of different parameters and the runtime under various inputs, you can refer to the data table in Table~\ref{table:experiments}.
\begin{table}[htbp]
    \centering
\begin{tabular}{ccc}
   \toprule
   (dim, bound) & lattice (s)& ideal lattice (s)\\
   \midrule
    (100, 5) & 0.406 & 0.467 \\
    (100, 10) & 0.555 & 0.598 \\
    (100, 15) & 0.713 & 0.759 \\
    (100, 20) & 0.894 & 0.934 \\
    (200, 5) & 3.999 & 5.538 \\
    (200, 10) & 5.607 & 7.503 \\
    (200, 15) & 7.494 & 8.203 \\
    (200, 20) & 9.365 & 11.140 \\
    (300, 5) & 16.426 & 30.870 \\
    (300, 10) & 23.916 & 37.507 \\
    (300, 15) & 30.485 & 44.475 \\
    (300, 20) & 39.398 & 57.703 \\
    (400, 5) & 46.075 & 93.985 \\
    (400, 10) & 61.436 & 103.909 \\
    (400, 15) & 87.487 & 136.954 \\
    (400, 20) & 115.221 & 153.318 \\
    (500, 5) & 110.583 & 192.532 \\
    (500, 10) & 144.965 & 297.249 \\
    (500, 15) & 204.832 & 313.888 \\
    (500, 20) & 270.002 & 393.900 \\
   \bottomrule
\end{tabular}
\caption{Experimental results for cost time when using random lattice/ideal lattice as input.}
\label{table:experiments}
\end{table}

Finally, although finding an ideal lattice in high dimensions is challenging, we conducted experiments in lower dimensions to estimate the reduction factor. We investigated the density of ideal lattices in low dimensions and small bounds. We performed 100,000 experiments for each parameter $\textit{dim}=3$, $\textit{bound}=3,4,5,6,7$ and $\textit{bound}=3$, $\textit{dim}=2,3,4,5,6$, recording the quantity of ideal lattices under different parameters. We observed a rapid decrease in the proportion of ideal lattices in Figure~\ref{fig:density}.
\begin{figure}
    \hspace{-2em} % Adjust the value as needed for leftward shift
	\subfloat[bound fixed]
	{
		\begin{minipage}{0.25\textwidth}
			\centering
			\includegraphics[scale=0.23]{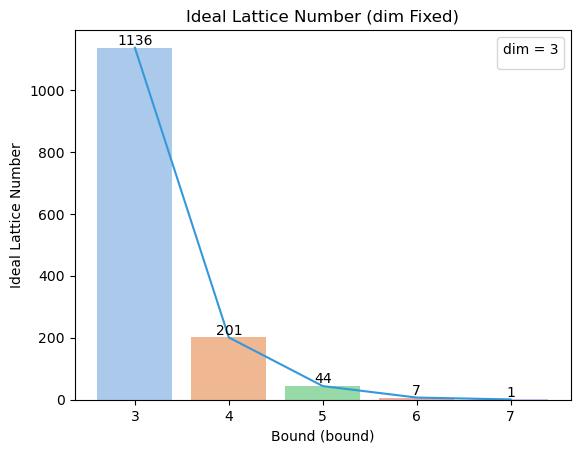}
		\end{minipage}
	}
	\subfloat[dim fixed]
	{
		\begin{minipage}{0.25\textwidth}
			\centering
			\includegraphics[scale=0.23]{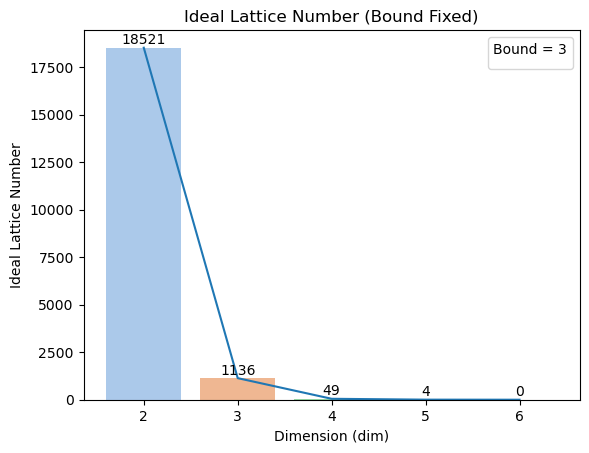}
		\end{minipage}
	}
	\caption{Density of ideal lattice} %
 \label{fig:density}
\end{figure}

\end{document}